\documentclass[amsmath,onecolumn, floatfix,nofootinbib, prx, aps, usenames, dvipsnames,11pt]{quantumarticle}
\pdfoutput=1



\usepackage[utf8]{inputenc}
\usepackage[english]{babel}
\usepackage[T1]{fontenc}
\usepackage{natbib}

\usepackage[usenames,dvipsnames]{xcolor}
\usepackage{graphicx}
\usepackage[ colorlinks = true, 
             linkcolor = MidnightBlue,
             urlcolor  = MidnightBlue,
             citecolor = orange,
             anchorcolor = ForestGreen,
]{hyperref} 
\usepackage{float}

\usepackage[sans]{dsfont} 
\usepackage{bbm}
\usepackage[mathscr]{euscript}
\usepackage{cancel} 
\usepackage{amsfonts}


\usepackage{tcolorbox}
\tcbuselibrary{theorems} 

\usepackage{amsmath}
\usepackage{amssymb}
\usepackage{units}
\usepackage{framed}
\usepackage{mdframed}
\usepackage{thmtools, thm-restate} 
\usepackage{ucs}
\usepackage{subcaption}
\usepackage{fullpage}
\usepackage{csquotes}
\usepackage{epigraph}

\usepackage{multicol}
\setlength{\columnsep}{0.5cm}

\usepackage{enumerate}

\usepackage{tikz}
\usetikzlibrary{arrows}
\usepackage{rotating, xcolor}
\usetikzlibrary{shapes}
\usetikzlibrary{hobby}
\usetikzlibrary{decorations.markings}



\makeatletter
\def\l@subsubsection#1#2{}
\def\l@subsection#1#2{}
\makeatother


\tcbset{colback=orange!5,colframe=orange!75!yellow, 
fonttitle= \bfseries, float=htb}



	\newcommand{\green}[1]{\textcolor{OliveGreen}{#1}}

	\newcommand{\lidia}[1]{\green{ [#1]}}



	
    \newcommand{\ket}[1]{\vert  #1 \rangle}
    \newcommand{\bra}[1]{\langle #1 |}
    \newcommand{\inprod}[2]{\langle #1 | #2 \rangle}
	\newcommand{\proj}[2]{\ket{#1}\bra{#2}}













	


























	\renewcommand{\vec}[1]{\mathbf{#1}}

	



	






\newcommand{\footnoteremember}[2]{\footnote{\label{#1}#2}\newcounter{#1}\setcounter{#1}{\value{footnote}}}
\newcommand{\footnoterecall}[1]{\hyperref[#1]{\footnotemark[\value{#1}]}}


\declaretheorem[]{axiom}
\declaretheorem[]{definition}
\declaretheorem[sibling=definition]{theorem}

\newenvironment{proof}{\paragraph{Proof:}}{\hfill$\square$}
\newenvironment{remark}{\textit{Remark:}}{}

\usepackage{cleveref}

%
%
%


\usepackage{xy}
\xyoption{matrix}
\xyoption{frame}
\xyoption{arrow}
\xyoption{arc}

\usepackage{ifpdf}
\ifpdf
\else
\PackageWarningNoLine{Qcircuit}{Qcircuit is loading in Postscript mode.  The Xy-pic options ps and dvips will be loaded.  If you wish to use other Postscript drivers for Xy-pic, you must modify the code in Qcircuit.tex}
\xyoption{ps}
\xyoption{dvips}
\fi

\entrymodifiers={!C\entrybox}

\newcommand{\qw}[1][-1]{\ar @{-} [0,#1]}
\newcommand{\qwx}[1][-1]{\ar @{-} [#1,0]}
\newcommand{\cw}[1][-1]{\ar @{=} [0,#1]}

\newcommand{\gate}[1]{*+<.6em>{#1} \POS ="i","i"+UR;"i"+UL **\dir{-};"i"+DL **\dir{-};"i"+DR **\dir{-};"i"+UR **\dir{-},"i" \qw}
\newcommand{\meter}{*=<1.8em,1.4em>{\xy ="j","j"-<.778em,.322em>;{"j"+<.778em,-.322em> \ellipse ur,_{}},"j"-<0em,.4em>;p+<.5em,.9em> **\dir{-},"j"+<2.2em,2.2em>*{},"j"-<2.2em,2.2em>*{} \endxy} \POS ="i","i"+UR;"i"+UL **\dir{-};"i"+DL **\dir{-};"i"+DR **\dir{-};"i"+UR **\dir{-},"i" \qw}





\newcommand{\control}{*!<0em,.025em>-=-<.2em>{\bullet}}

\newcommand{\ctrl}[1]{\control \qwx[#1] \qw}

\newcommand{\targ}{*+<.02em,.02em>{\xy ="i","i"-<.39em,0em>;"i"+<.39em,0em> **\dir{-}, "i"-<0em,.39em>;"i"+<0em,.39em> **\dir{-},"i"*\xycircle<.4em>{} \endxy} \qw}




\newcommand{\gategroup}[6]{\POS"#1,#2"."#3,#2"."#1,#4"."#3,#4"!C*+<#5>\frm{#6}}

\newcommand{\lstick}[1]{*!R!<.5em,0em>=<0em>{#1}}


\newcommand{\Qcircuit}{\xymatrix @*=<0em>}



\usepackage[page]{totalcount}

\captionsetup[subfigure]{skip=5pt}
\captionsetup[figure]{skip=10pt}
\captionsetup[subfigure]{justification=centering}

\usetikzlibrary{arrows,decorations.markings}


\begin{document}

\title{Multi-agent paradoxes beyond quantum theory}

\author{V. Vilasini}
\affiliation{Department of Mathematics, University of York, Heslington, York, YO10 5DD, UK}
\email{vv577@york.ac.uk}

\author{Nuriya Nurgalieva}
\affiliation{Institute for Theoretical Physics, ETH Zurich, 8093 Z\"{u}rich, Switzerland}
\email{nuriya@phys.ethz.ch}

\author{Lídia del Rio}
\affiliation{Institute for Theoretical Physics, ETH Zurich, 8093 Z\"{u}rich, Switzerland}
\email{lidia@phys.ethz.ch}

\date{}

\begin{abstract}
Which theories lead to a contradiction between simple reasoning principles and modelling observers' memories as physical systems? Frauchiger and Renner have shown that this is the case for quantum theory~\cite{Frauchiger2018}. 
Here we generalize the conditions of the Frauchiger-Renner result so that they can be applied to arbitrary physical theories, and in particular to those expressed as \emph{generalized probabilistic theories} (GPTs) \cite{Hardy01, Barrett07}. We then apply them to a particular GPT, box world, and find a deterministic contradiction in the case where agents may share a PR box \cite{Popescu1994}, which is stronger than the quantum paradox, in that it does not rely on post-selection. 
Obtaining an inconsistency for the framework of GPTs broadens the landscape of theories which are affected by the application of classical rules of reasoning to physical agents.
In addition, we model how observers' memories may evolve in box world, in a way consistent with Barrett's criteria  for allowed operations \cite{Barrett07, Gross2010}.
\end{abstract}

\maketitle

\setlength{\epigraphwidth}{4in}
\epigraph{Ordinary readers, forgive my paradoxes: one must make them when one reflects; and whatever you may say, I prefer being a man with paradoxes than a man with prejudices.}{Jean-Jacques Rousseau, \emph{Emile or On Education}}

\section{Motivation}
\label{sec:introduction}

\begin{figure}[t]
\centering
    \begin{subfigure}{0.4\textwidth}
       \includegraphics[scale=0.3]{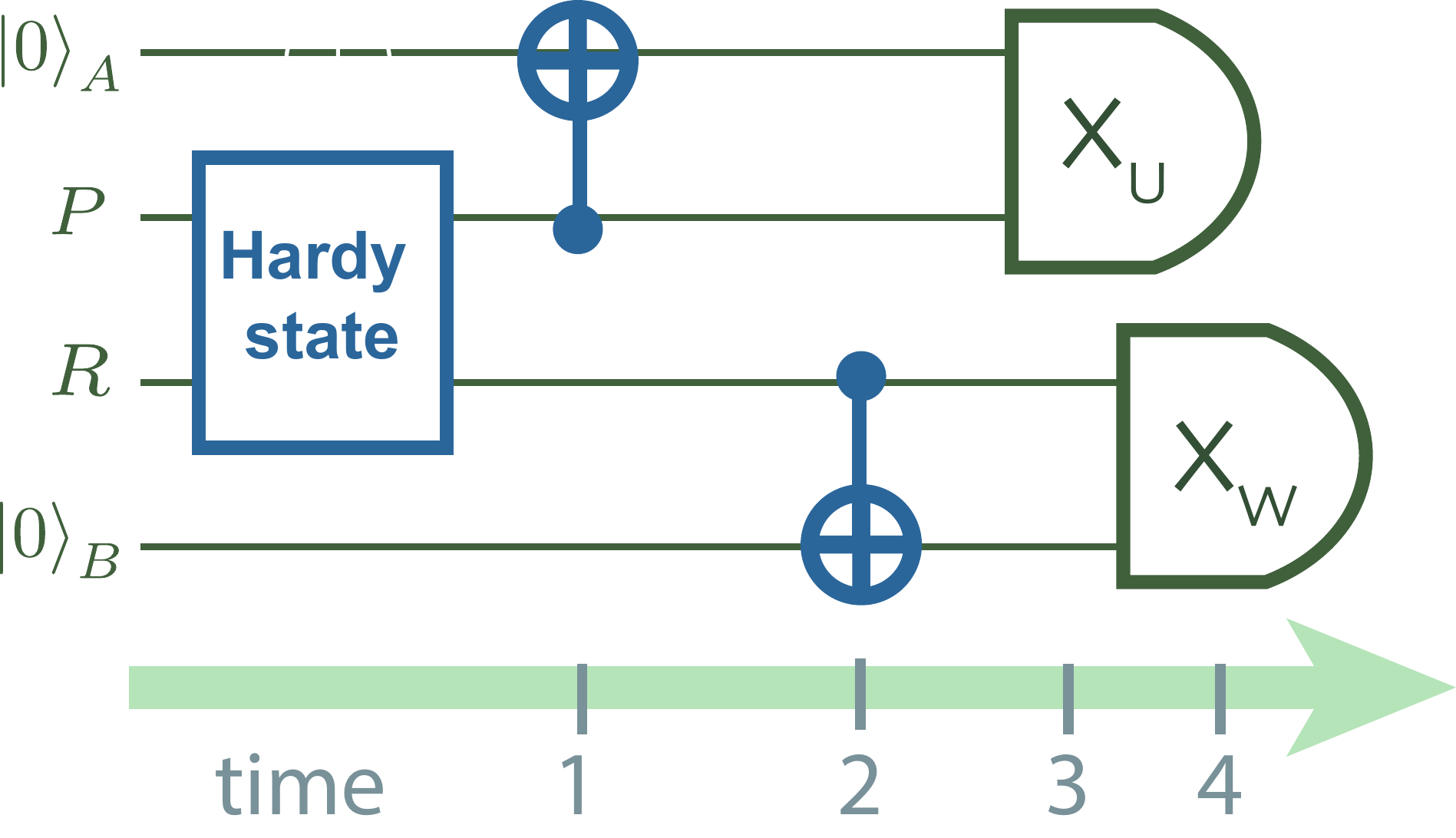}
        \caption{Circuit for the Frauchiger-Renner quantum thought experiment, where Alice and Bob share a Hardy state \cite{Frauchiger2018}. }
        \label{fig:circuitFR}
    \end{subfigure}
    \
    \begin{subfigure}{0.45\textwidth}
        \includegraphics[scale=0.3]{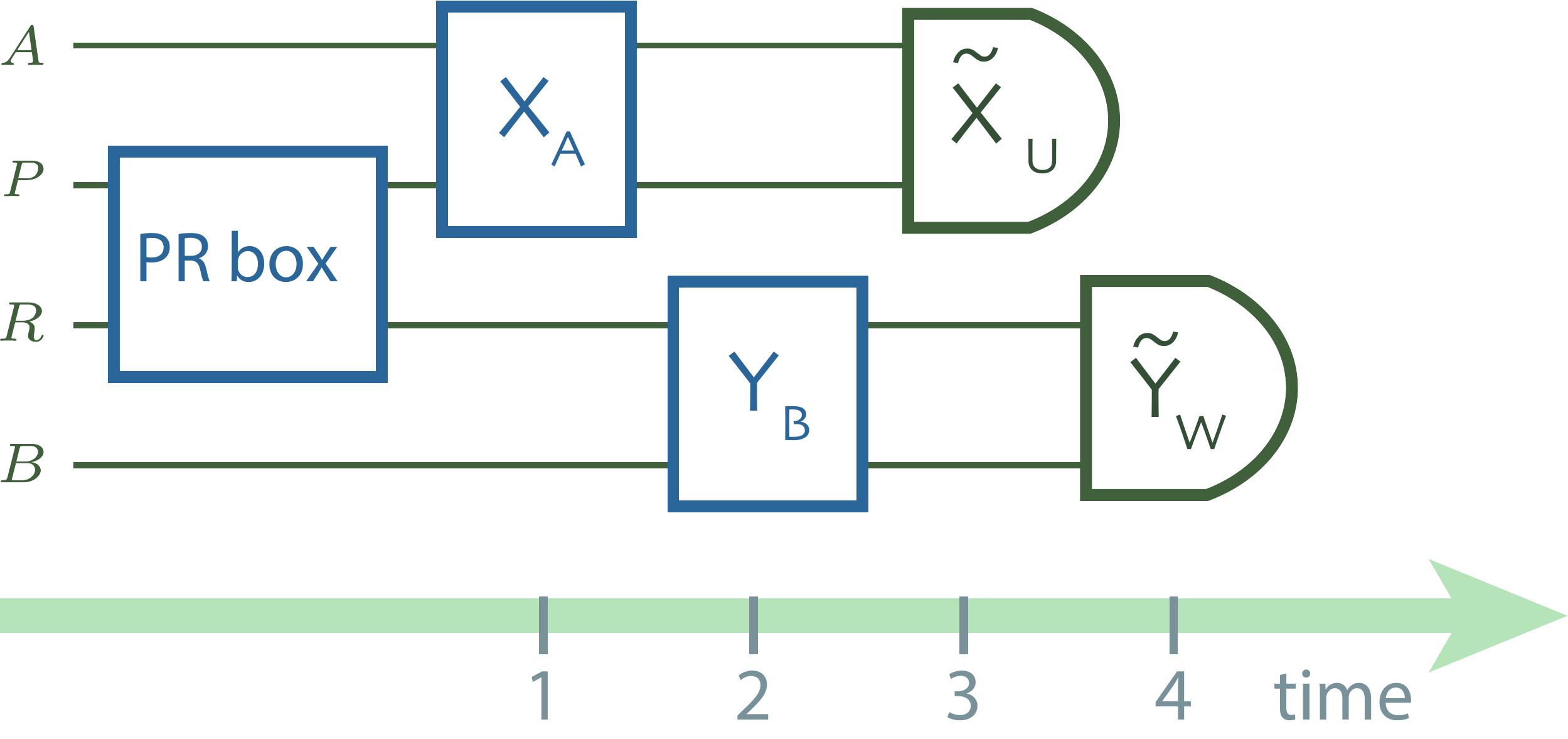}
        \caption{Timeline for the proposed thought experiment in box world, where agents Alice and Bob share a PR box.}
        \label{fig:circuitPRbox}
    \end{subfigure}
\caption{{\bf Protocols for multi-agent paradoxical experiments, as seen by outside observers.} Two instances of thought experiments which lead to a logical contradiction:  in quantum theory and in box world. {\bf a)} In the quantum case, the original experiment is formulated as a prepare-and-measure scenario  \cite{Frauchiger2018} but is equivalent to the version shown here. Alice and Bob share a Hardy state, $\ket{\Psi}_{PR}=1/\sqrt{3}(\ket{00}+\ket{10}+\ket{11})$ where the $P$ and $R$ systems correspond to Alice's and Bob's halves of the state respectively. They then measure their halves of the state in the $Z$ basis and record the outcome in their quantum memories $A$ and $B$. For outside observers Ursula and Wigner, these measurements are modelled as unitary evolutions (like the {\sc cnot} gates pictured) that correlate memory and the subsystem measured.  Finally, Ursula (and Wigner) measure $AP$ (and $RB$) in the basis $\{\ket{ok}_{AP}=1/\sqrt{2}(\ket{00}_{AP}-\ket{11}_{AP}),\ket{fail}_{AP}=1/\sqrt{2}(\ket{00}_{AP}+\ket{11}_{AP})\}$ (and analogously defined $\{\ket{ok}_{RB},\ket{fail}_{RB}\}$). 
After all is done, agents reason about each others' knowledge to find a contradiction. In the quantum case, this only happens if Ursula and Wigner obtain the outcomes $u=w=ok$ (giving the paradoxical chain $u=w=ok\Rightarrow b=1 \Rightarrow a=1 \Rightarrow w=fail$, where $a$ and $b$ are Alice's and Bob's outcomes). 
{\bf b)} In box world, Alice and Bob share a PR box and measure their halves using some measurements labelled $X=Y$ respectively (not to be confused with $X$,$Y$ basis measurements), updating their memories $A$ and $B$. Ursula and Wigner can now measure the joint systems $AP$ and $RB$ using different settings ($\tilde X=\tilde Y=1 = X \oplus 1$).
In this case, the contradiction can always be found independently of their outcomes (Section~\ref{sec:paradox}).}
\label{fig:circuits}
\end{figure}

In order to process information and make logical inferences, we would like to be able to apply simple reasoning principles to all situations. By this we mean that ideally we would like inferences such as ``if I know that $a$ holds, and I know that $a$ implies $b$, then I know that $b$ holds'' to be valid independently of the nature of $a$ and $b$ --- to take logic as a primitive that can be applied to any physical setting. When considering scenarios with several rational agents,  this extends to reasoning about each other's knowledge. Examples include games like poker, complex auctions, cryptographic scenarios, and of course  \href{https://en.wikipedia.org/wiki/Hat_puzzle}{logical hat puzzles}, where we must process complex statements of the sort ``I know that she knows that he does not know $a$'' to keep track of the flows of knowledge. 

On the other hand, when we describe the world through physics, we would like to consider ourselves a part of it, and in particular we would like to model our brains and memories as physical systems described by some theory. 
When that theory is quantum mechanics, it turns out that these two desiderata (applying to reason about each other's knowledge, and modelling memories as physical systems) are incompatible. This was first pointed out by Frauchiger and Renner, in a thought experiment where agents who can measure each others  memories (modelled as quantum systems) and reason about shared and individual knowledge may reach contradictory conclusions \cite{Frauchiger2018}. We will not review the original experiment here, apart from a very brief description in Figure~\ref{fig:circuitFR}; a pedagogical exposition can be found in our paper~\cite{NL2018}, but is not necessary to follow this article.

Our ultimate goal is  to understand whether this incompatibility between multi-agent logic and physics is a peculiar feature of quantum theory, or if  other physical theories also admit this kind of contradictions.
If the latter is true, we would like to  outline a class of theories where these logical inconsistencies may arise. Such an analysis  could help us identify the features of quantum theory responsible for such a paradox; in particular, here we investigate the landscape of generalized probabilistic theories \cite{Hardy01, Barrett07}.

\paragraph{Contributions of this work.} In Section~\ref{sec:conditions}, we generalize conditions on reasoning, memories and measurements so that they can be applied to any physical theory. The conditions can be briefly summarized as: 
 agents may use logic to reason about each others' knowledge; a physical theory allows agents to make predictions about the outcomes of measurements; and a measurement by an agent Alice  may be modelled by others as a physical evolution on her lab which preserve the information about the original system measured (from the outside agents' perspective). This generalizes the von Neumann view of measurements as a unitary evolution of the system and measurement apparatus \cite{vonNeumann1955}.
In Section~\ref{sec:boxworld} we apply those conditions to the framework of \textit{generalized probabilistic theories} (GPTs) \cite{Hardy01, Barrett07}; in particular we introduce a way to describe an agent's measurement from the perspective of other agents in the particular GPT of box world. 
Finally, in Section~\ref{sec:paradox} we derive a logical inconsistency akin to one found in \cite{Frauchiger2018}, using a setup where agents share a PR box, a maximally non-local resource in box world. The paradox found is stronger than the quantum one, in the sense that it does not rely on post-selection: agents always reach a contradiction, independently of the outcome\footnote{The joint state and the probability distributions of the original Frauchiger-Renner paradox are akin to those of Hardy's paradox~\cite{Hardy1993}. For a comparison of Hardy's paradox and PR box and why the latter allows for a contradiction without post-selection, see \cite{Abramsky15}.}. 
A high-level circuit representation of the original experiment, as well as  the PR box version, are depicted in Figure \ref{fig:circuits}.

\section{Generalized reasoning, memories and measurements}
\label{sec:conditions}
Here we generalize the Frauchiger-Renner conditions for inter-agent consistency to general physical theories. The  conditions can be instantiated by each specific theory.  This includes but is not limited to theories framed in the approach of generalized probabilistic theories \cite{Hardy01}. 
In some theories, like quantum mechanics and box world (a GPT), we will find these four conditions to be incompatible, by finding a direct contradiction in examples like the Frauchiger-Renner experiment or the PR-box experiment described in Section~\ref{sec:paradox}. In other theories (like classical mechanics and Spekkens' toy theory \cite{Spekkens07}) these four conditions may be compatible.  A complete characterization of theories where one can find these paradoxes is the subject of future work.

\subsection{Reasoning about knowledge}

\begin{figure}[t]
\centering
\begin{subfigure}{0.45\textwidth}
    \vspace{1cm}
    \includegraphics[scale=0.105]{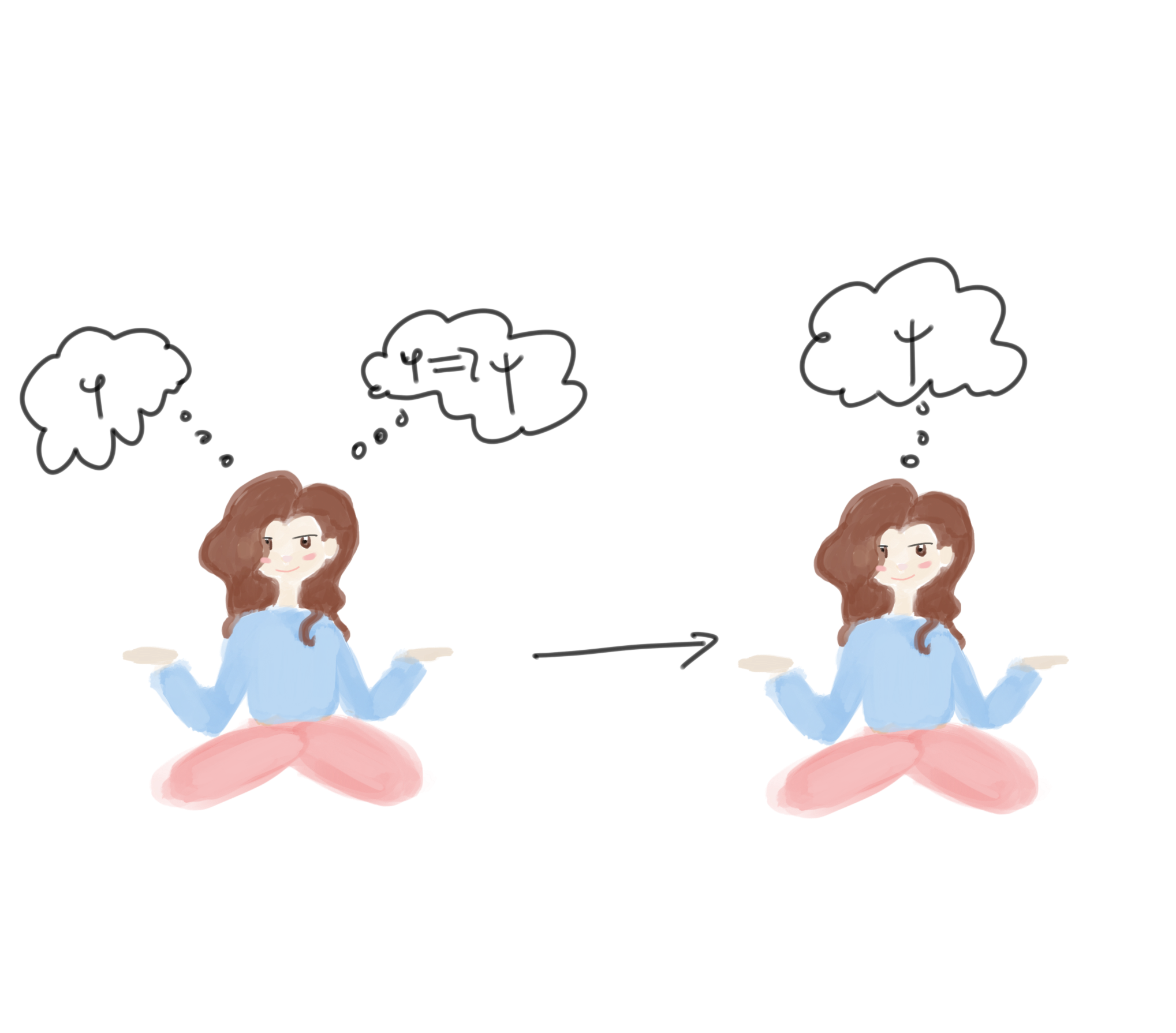}
    \caption{An agent using deduction, applying the distribution axiom of modal logic.}
    \label{fig:distribution}
\end{subfigure}
\
\begin{subfigure}{0.45\textwidth}
    \includegraphics[scale=0.12]{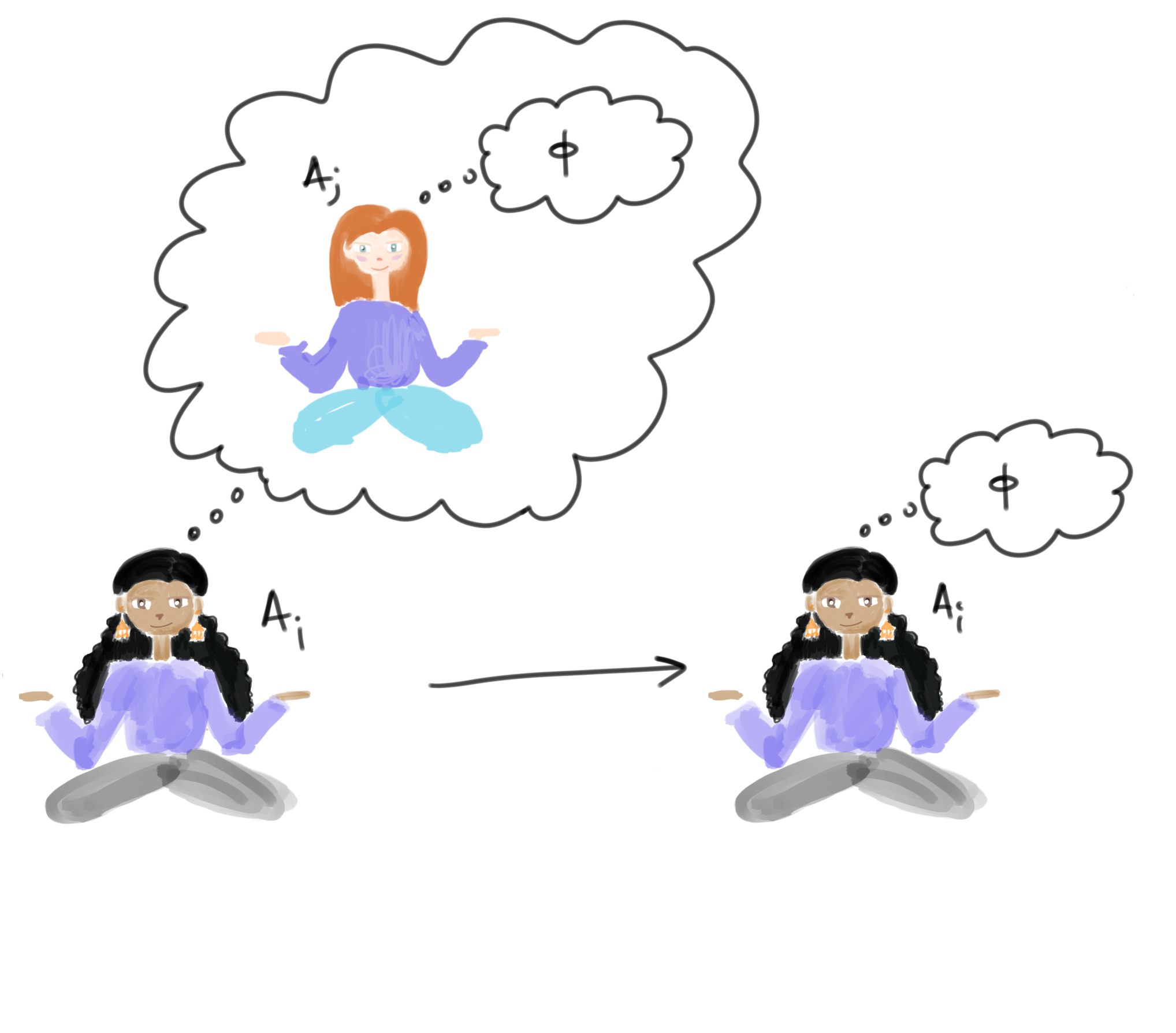}
     \caption{An agent $A_i$ trusts another agent $A_j$, denoted by $A_j \leadsto A_j$, if they take all of $A_j$'s knowledge to be true for $A_i$ as well.}
    \label{fig:trust}
\end{subfigure}
\caption{\textbf{Agents use logic to reason.} A desiderata for useful physical theories is that agents be allowed to make deductions and transfer knowledge from one another, given a trust relation (Definition \ref{def:reasoning}). For a short review of the modal logic framework and axioms, see Appendix \ref{appendix:logic}.}
\end{figure}

This condition is theory-independent. It tells us that rational agents can reason about each other's knowledge in the usual way. This is formalized by a weaker version of \emph{epistemic modal logic}, which we explain in the following (for the full derivation of the form used here see \cite{NL2018}).

Let us start with a simple example.
The goal of modal logic is to allow us to operate with chained statements like ``Alice knows that Bob knows that Eve doesn't know the secret key $k$, and Alice further knows that $k=1$,''  which can be expressed as $$K_A\ [(K_B\ \neg K_E\  k ) \ \wedge \ k=1],$$ where the operators $K_i$ stand for ``agent $i$ knows.''
If in addition Alice trusts Bob to be a rational, reliable agent, she can deduce from the statement ``I know that Bob knows that Eve doesn't know the key'' that ``I know that Eve doesn't know the key'', and forget about the source of information  (Bob). This is expressed as   $$K_A ( K_B\  \neg K_E \ k) \implies K_A \ \neg K_E \ \ k.$$
We should also allow Alice to make deductions of the type ``since Eve does not know the secret key, and one would  need to know the key in order to recover the encrypted message $m$, I conclude that Eve cannot know the secret message,'' which can be encoded as $$K_A [(\neg K_E\  k) \wedge (K_i\ m \implies K_i\ k,\ \forall \ i)] \implies K_A \neg K_E\  m.$$ 
Generalizing from this example, this gives us the following structure. 

\begin{definition}[Reasoning agents]
\label{def:reasoning}
An experimental setup with multiple agents $A_1, \dots A_N$ can be described by knowledge operators $K_1, \dots K_N$ and statements $\phi \in \Phi$, such that $K_i \phi$ denotes ``agent $A_i$ knows $\phi$.'' It should allow agents to make \emph{deductions} (Figure \ref{fig:distribution}), that is
 $$K_i [\phi \wedge (\phi \implies \psi) ] \implies K_i\ \psi.$$
 
Furthermore, each experimental setup defines a \emph{trust relation} between agents (Figure \ref{fig:trust}): we say that an agent $A_i$ trusts another agent $A_j$ (and denote it by $A_j \leadsto A_j$) iff  for all statements $\phi$, we have $$ K_i (K_j \ \phi) \implies K_i\ \phi.$$
\end{definition}

For the purposes of following the example of Section~\ref{sec:paradox}, this informal definition suffices. 
The full formal version of the axioms of modal logic used here can be found in Appendix~\ref{appendix:logic}.\footnote{Note that in general `one human $\neq$ one agent.' For example, consider a setting where we know that Alice's memory will be tampered with at time $\tau$ (much like the original Frauchiger-Renner experiment, or the sleeping beauty paradox \cite{Elga2000}). We can define  two different agents $A_{t<\tau}$ and $A_{t>\tau}$ to represent Alice before and after the tampering --- and then for example Bob could trust pre-tampering (but not post-tampering) Alice, $A_{t<\tau} \leadsto B$.} 

\paragraph{A note on the complexity cost of reasoning.}
Note that in general, even the most rational physical agents may be limited by bounded processing power and memory, will not be able to chain an indefinite number of deductions within sensible time scales. That is, these axioms for reasoning are an idealization of absolutely rational agents with unbounded processing power (see~\cite{Aaronson2017} for an overview of this and related issues). If we would like modal logic to apply to realistic, physical agents, we might account for a cost (in time, or in memory) of each logical deduction, and require it to stay below a given threshold, much like a resource theory for complexity. However, in the examples of this paper, agents only need to make a handful of logical deductions, and these complexity concerns do not play a significant role.

\subsection{Physical theories as common knowledge}

This condition is to be instantiated by each physical theory, and is the way that we incorporate the physical theory into the reasoning framework used by agents in a given setting.
If all agents use the same  theory to model the operational experiment (like quantum mechanics, special relativity, classical statistical physics, or box world), this is included in the \emph{common knowledge} shared by the agents.  
For example, in the case of quantum theory, we have that ``everyone knows that the probability of obtaining outcome $\ket x$ when measuring a state $\ket\psi$ is given by $|\inprod x\psi|^2$, and everyone knows that everyone knows this, and so on.''

\begin{figure}
    \centering
    \includegraphics[scale=0.12]{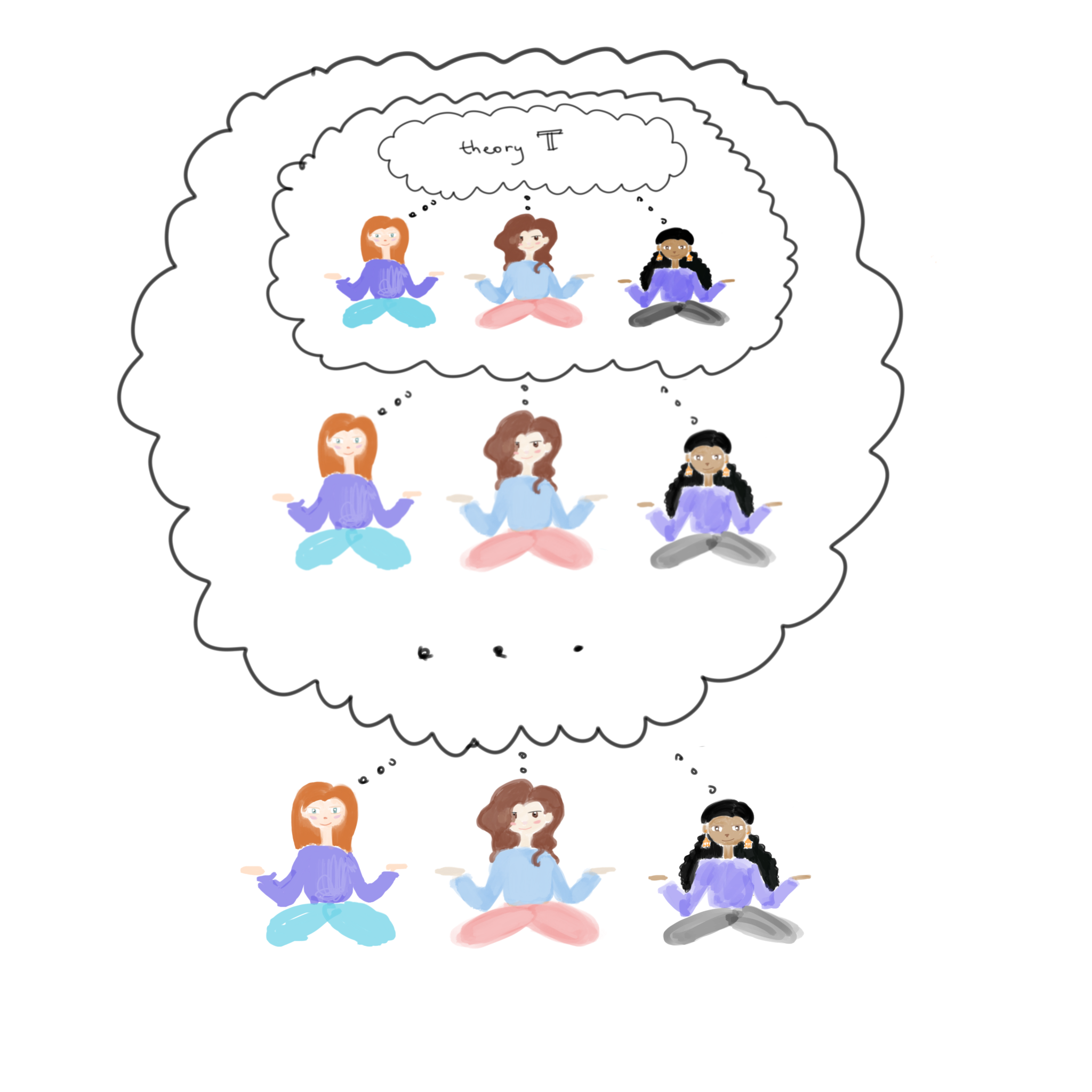}
    \caption{\textbf{Common knowledge.} Here, a shared physical theory $\mathbb T$ is common knowledge: all agents know that all agents know that ... (and so on) ... that theory $\mathbb T$ holds.}
    \label{fig:theory}
\end{figure}

\begin{definition}[Common knowledge]
We model a physical theory shared by all agents $\{A_i\}_i$ in a given setting as a set $\mathbbm T$ of statements that are common knowledge shared by all agents, i.e.
$$ \phi \in \mathbb T \iff  (\{K_i\}_i)^n \ \phi, \quad \forall\ n\in \mathbb N,$$
where $(\{K_i\}_i)^n$ is the set of all possible sequences of $n$ operators picked from $\{K_i\}_i$. For example, $(K_1 \ K_5 \ K_1 \ K_2) \in (\{K_i\}_i)^4$ and stands for ``agent $A_1$ knows that agent $A_5$ knows that agent $A_1$ knows that agent $A_2$ knows.'' 
\end{definition} 

Note that the set $\mathbb T$ of common knowledge may include statements about the settings of the experiment, as well as complex derivations \footnote{One can also alternatively model a physical theory as a subset $\mathbb T_P$ of the set $\mathbb T$ of common knowledge, $\mathbb T_P\subseteq\mathbb T$, in the case when details of experimental setup are not relevant to the theoretical formalism.}. 
To find our paradoxical contradiction, we may only need a very weak version of a full physical theory: for example Frauchiger and Renner only require a possibilistic version of the Born rule, which tells us whether an outcome will be observed with certainty \cite{Frauchiger2018}. This will also be the case in box world.

\subsection{Agents as physical systems}

In operational experiments, a reasoning agent can make statements about systems that she studies; consequently, the theory used by the agent must be able to produce a description or a model of such a system, namely, in terms of a set of states. For example, in quantum theory a two-state quantum system with a ground state $\ket{0}$ and an excited state $\ket{1}$ (\textit{qubit}) can be fully described by a set of states $\{\ket{\psi}\}$ in a Hilbert space $\mathcal{H}$, where $\ket{\psi}=\alpha\ket{0}+\beta\ket{1}$ with $\alpha,\beta\in\mathbb{C}$ and $|\alpha|^2+|\beta|^2=1$. Another examples of theories and respective descriptions of states of systems include: GPTs, where e.g.~a generalised bit (\textit{gbit}) is a system completely characterized by two binary measurements which can be performed on it \cite{Barrett07} (a review of GPTs can be found in Section \ref{sec:boxworld}); algebraic quantum mechanics, with states defined as linear functionals $\rho:A\to\mathbb{C}$, where $A$ is a $C*$-algebra \cite{vonNeumann1955}; or resource theories with some state space $\Omega$, and epistemically defined subsystems \cite{delRio2015, Kraemer2018}.

\begin{definition}[Systems]
\label{def:systems}
Here we call a ``physical system'' (or simply ``system'') anything that can be an object of a physical study\footnote{We strive to be as general as possible and do not suppose or impose any structure on systems and connections between them; in particular, we don't make any assumptions about how composite systems are formally described in terms of their parts.}. A system can be characterized, according to the theory $\mathbb{T}$, by a set of states $\{P_S^i\}_{i\in\mathcal{I}_S}$ ($\mathcal{I}_S \subseteq\mathbb{N}$).
\end{definition}

We have already used knowledge operators $K_i$ to denote knowledge of each agent. Now let us add memory to the formal description of an agent.
\begin{definition}[Agents]
\label{def:agents}
A physical setting may be associated with a set $\mathcal A$ of agents. 
An agent $A_i \in\mathcal{A}$ is described by a knowledge operator $K_i\in\mathcal{K_\mathcal{A}}$ and a physical system $M_i\in\mathcal{M_\mathcal{A}}$, which we call a ``memory.'' Each agent may study other systems according to the theory $\mathbb T$. An agent's memory $M_i$ records the results and the consequences of the studies conducted by $A_i$. The memory may be itself an object of a study by other agents.
\end{definition}

\subsection{Measurements and memory update}
Here we consider measurements both from the perspective of an agent who performs them, and that of another agent who is modeling the first agent's memory.

In an experiment involving measurements, each agent has the subjective experience of only observing one outcome (independently of how others may model her memory), and we can see this as the definition of a measurement: if there is no subjective experience of observing a single outcome, we don't call it a measurement. We can express this experience as statements such as $\phi_0 =$ ``The outcome was 0, and the system is now in state $\ket0$.'' Let us explain further after the formal definition. 

\begin{definition}[Measurements]
\label{def:measurements}
A measurement is a type of study that can be conducted by an agent $A_i$, while studying a system $S$; the essential result of the study is the obtained ``outcome'' $x\in\mathcal{X}_S$. If \emph{witnessed} by another agent $A_j$ (who knows that $A_i$ performed the measurement but does not know the outcome), the measurement is characterized by a set of propositions $\{\phi_x\}\in\Phi$, where $\phi_x$ corresponds to the outcome $x$, satisfying:
\begin{itemize}
\item $K_j(\exists \ x\in\mathcal{X}_S: K_i\ \phi_x)$,
\item $K_j \ K_i\ \phi_x \implies K_j \ K_i\ \neg (\phi_y), \quad \forall \ y \neq x$.
\end{itemize}
\end{definition}
The first condition tells us that $A_j$ knows that $A_i$ must have observed one outcome, and derived all the relevant conclusions, as expressed by one of the propositions $\phi_x$. For example, if the measurement represents a perfect $Z$ measurement of a qubit, $\phi_0$ may include statements like ``the qubit is now in state $\ket0$; before the measurement it was not in state $\ket1$; if I measure it again in the same way, I will obtain outcome 0;'' and so on.
The second condition roughly implements the experience of observing a single outcome and trusting that information. If $A_i$ observes $x$, they conclude that the conclusions $\phi_y$ that they would have derived had they observed a different outcome $y$ are not valid. In the previous example, they would know that it does not hold $\phi_1=$ ``the qubit is now in state $\ket1$; before the measurement it was not in state $\ket0$; if I measure it again I will see outcome 1.'' This condition also ensures that the conclusions $\{\phi_x\}_x$ are mutually incompatible, i.e.\ that the measurement is tightly characterized.

A measurement of another agent's memory is also an example of a valid measurement. In other words, agent $A_j$ can choose  $A_i$'s lab, consisting of $A_i$'s memory and another system $S$ (which $A_i$ studies), as an object of her study.

Thus, any agent's memory can be modelled by the other agents as a physical system undergoing an evolution that correlates it with  the measured system. In quantum theory, this corresponds to the unitary evolution 
\begin{align}
    \left( \sum_{x=0}^{N-1} p_x\ \ket x_{\text{system}}\right) \otimes \ket{0}_{\text{memory}} \to  \sum_{x=0}^{N-1} p_x \underbrace{\ket x_{\text{system}} \otimes \ket{x}_{\text{memory}}}_{=: \ \ket{\tilde x}_{SM}}.
    \label{eq:entangling_measurement}
\end{align}
The key aspect here is that the set of states of the joint system of observed system and memory, 
$\{P_{SM}^l\}_l = \operatorname{span} \{  \ket x_{\text{system}} \otimes \ket{x}_{\text{memory}}  \}_{x=0}^{N-1} $ is post-measurement isomorphic to the the set of states $\{P_S^j\}_j$ system alone. That is, for every transformation $\epsilon_S$ that you could apply  to the system before the measurement, there is a corresponding transformation $\epsilon_{SM}$ acting on the $\{P_{SM}^l\}_l$ that is operationally identical. By this we mean that an outside observer would not be able to tell if they are operating with $\epsilon_S$ on a single system before the measurement, or with  $\epsilon_{SM}$ on system and memory after the measurement. In particular, if $\epsilon_S$ is itself another measurement on $S$ within a probabilistic theory, it should yield the same statistics as post-measurement $\epsilon_{SM}$. 
For a quantum example that helps clarify these notions, consider $S$ to be a qubit initially in an arbitrary state $\alpha \ket0_S + \beta \ket1_S$. An agent Alice measures $S$ in the $Z$ basis and stores the outcome in her memory $A$.  While she has a subjective experience of seeing only one possible outcome, an outside observer Bob could model the joint evolution of $S$ and $A$ as
$$ \left( \alpha \ket0_S + \beta \ket1_S\right) \otimes \ket0_A \ \to \   \alpha \ket0_S \ket0_A + \beta \ket1_S \ket1_A. $$
Suppose now that (before Alice's measurement) Bob was interested in performing an $X$ measurement on $S$. This would have been a measurement with projectors $\{ \proj ++_S, \proj --_S\}$, where $ \ket \pm_S = \frac1{\sqrt2} (\ket0_S \pm \ket1_S)$. 
However, he arrived too late: Alice has already performed her $Z$ measurement on $S$. If now Bob simply measured $X$ on $S$ he would obtain uniform statistics, which would be uncorrelated with the initial state of $S$. So what can he do? It may not be very friendly, but he can measure $S$ and Alice's memory $A$ jointly, by projecting onto
\begin{align*}
    \ket+_{SA} &= \frac1{\sqrt2} (\ket0_S \ket0_A + \ket1_S \ket1_A) \\ \ket-_{SA} &= \frac1{\sqrt2} (\ket0_S \ket0_A - \ket1_S \ket1_A),
\end{align*}
which yields the same statistics of Bob's originally planned measurement on $S$, had Alice not measured it first.
This equivalence should also hold in the more general case where the observed system may have been previously correlated with some other reference system: such correlations should be preserved in the measurement process, as modelled from the ``outside'' observer Bob. 

There are many options to formalize this notion that ``every way that an outside observer could have manipulated the system before the measurement, he may now manipulate a subspace of `system and observer's memory,' with the same results.'' 
A possible simplification to restrict our options is to take subsystems and the tensor product structure as primitives of the theory, which is the case for GPTs \cite{Barrett07} but not for general physical theories (like field theories; for a discussion see \cite{Kraemer2018}). In the interest of time, we will for now restrict ourselves to this case, and leave a more general formulation of this condition as future work.   
For simplicity, we also restrict ourselves to information-preserving measurements (excluding for now those where some information may have leaked to an  environment external to Alice's memory), which are sufficient to derive the contradiction.

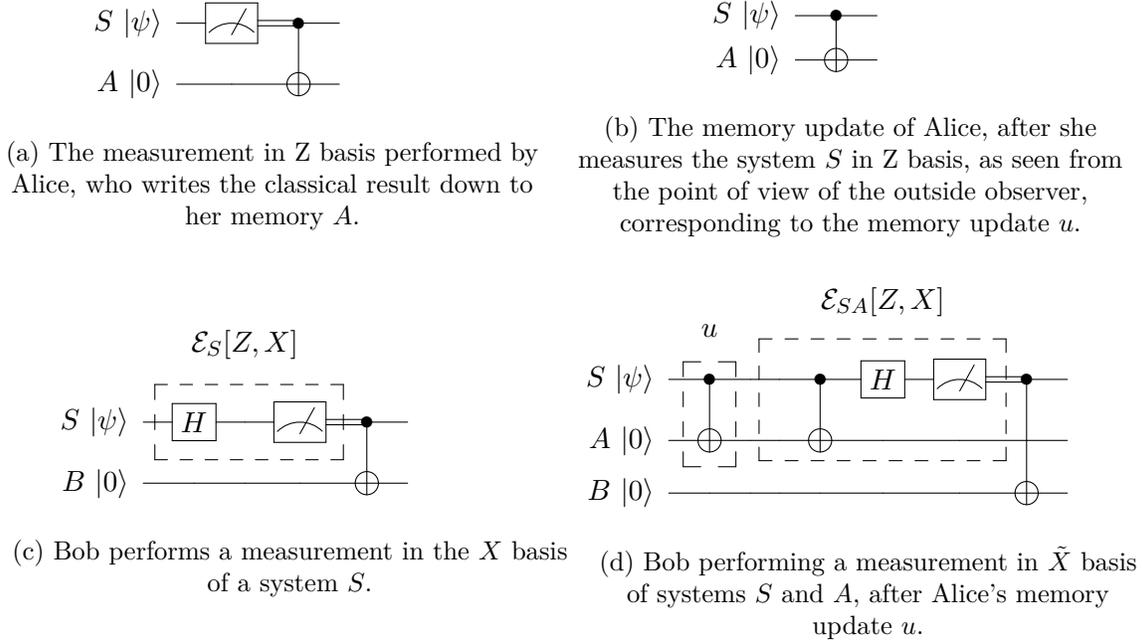
\begin{figure}[t!]
\centering
    \begin{subfigure}{0.45\textwidth}
        \begin{align*}
        \Qcircuit @C=1em @R=1em {
        \lstick{S \ \ket{\psi}} & \meter & \control{1} \cw & \qw & \\
        \lstick{A \ \ket{0}} & \qw & \targ \qwx[-1] & \qw \\
        }
        \end{align*}
        \caption{The measurement in Z basis performed by Alice, who writes the classical result down to her memory $A$.}
        \label{fig:view_of_alice}
    \end{subfigure}
    \
    \begin{subfigure}{0.45\textwidth}
        \begin{align*}
        \Qcircuit @C=1em @R=1em {
        \lstick{S \ \ket{\psi}} & \ctrl{1} & \qw & \\
        \lstick{A \ \ket{0}} & \targ & \qw & \\
        }
        \end{align*}
        \caption{The memory update of Alice, after she measures the system $S$ in Z basis, as seen from the point of view of the outside observer, corresponding to the memory update $u$.}
        \label{fig:view_on_alice}
    \end{subfigure}
    \newline
    \begin{subfigure}{0.45\textwidth}
        \begin{align*}
        \Qcircuit @C=1em @R=1em {
        && \mbox{$\mathcal{E}_{S}[Z,X]$} &&&\\
        &&&&&& \\
        \lstick{S \ \ket{\psi}}  & \gate{H} & \qw & \meter & \control{1} \cw & \qw \\
        \lstick{B \ \ket{0}} & \qw & \qw& \qw & \targ \qwx[-1] & \qw \gategroup{3}{2}{3}{4}{1.2em}{--}
        }
        \end{align*}
        \caption{Bob performs a measurement in the  $X$ basis of a system $S$.}
        \label{fig:operation_s}
    \end{subfigure}
    \
    \begin{subfigure}{0.45\textwidth}
        \begin{align*}
        \Qcircuit @C=1em @R=1em {
        &&&&&  \mbox{$\mathcal{E}_{SA}[Z, X]$} &&& \\
        & \mbox{$u$} & & & & & & & \\
        \lstick{S \ \ket{\psi}} & \ctrl{1} & \qw & \qw & \ctrl{1} & \gate{H} & \meter & \control{1} \cw & \qw \\
        \lstick{A \ \ket{0}} & \targ & \qw & \qw & \targ & \qw & \qw & \qw & \qw \\
        \lstick{B \ \ket{0}} & \qw & \qw & \qw & \qw & \qw & \qw & \targ \qwx[-2] & \qw \gategroup{3}{4}{4}{7}{1.4em}{--} \gategroup{3}{2}{4}{2}{1em}{--}
        }
        \end{align*}
        \caption{Bob performing a measurement in $\tilde X$ basis of systems $S$ and $A$, after Alice's memory update $u$.}
        \label{fig:operation_sa}
    \end{subfigure}
\caption{{ \bf  The measurement and memory update in quantum theory from different perspectives.} 
From Alice's point of view, the measurement of the system $S$ either in Z basis yields a classical result, which she records to her memory $A$, performing a classical CNOT (Figure \ref{fig:view_of_alice}). From an outside observer, Bob's perspective, as he is not aware of Alice's measurement result, the CNOT is a quantum entangling operation, which corresponds to the memory update $u$ (Figure \ref{fig:view_on_alice}). If he had access to the system $S$ prior to the measurement by $A$, and wanted to measure it in X basis ($\{\ket{+}_S,\ket{-}_S\}$), he would have to perform an operation $\mathcal{E}_S [Z,X]$ (and then copy the classical result into his memory $B$) (Figure \ref{fig:operation_s}). If the system $S$ was initially in a state $\ket{\psi} = \ket{+}_S$, then a proposition which would correspond to this operation is $\phi[\mathcal{E}_S [Z,X](\ket{\psi}_S)] = ``s=+"$. However, if the measurement in Z is already performed by $A$ and the result is written to her memory, the whole process described by Bob as a memory update $u$, and in order to comply his initial wish to measure $S$ only, he can perform an operation $\mathcal{E}_{SA}[Z,X]$ on $S$ and $A$ together instead, which is a measurement in $\{\ket{+}_{SA},\ket{-}_{SA}\}$ basis (Figure \ref{fig:operation_sa}). A proposition which this operation yields is $\phi[\mathcal{E}_{SM_i}\circ u(\ket{\phi}_S)] = ``sa=+"$ (as $\ket{\xi}_{SA} = \ket{+}_{SA}$), which naturally follows from $``s=+"$, given the structure of the memory update $u$.}
\label{fig:circuits_views}
\end{figure}

\begin{definition}[Information-preserving memory update]
\label{def:memory_update}
Let $\{P_S^j\}_j$ be a set of states of a system $S$ that is being studied by an agent $A_i$ with a memory $M_i$, and $\{P_{SM_i}^l\}_l$ be a set of states of the joint system $SM_i$, which consists of the systems $S$ and $M_i$. Then a map $u: \{P_S^j\}_j \rightarrow \{P_{SM_i}^l\}_l$ is called an \emph{information-preserving memory update} if for all operations $\mathcal{E}_S: \{P_S^j\}_j \rightarrow \{P_S^j\}_j$ on the system $S$, there exists an operation $\mathcal{E}_{SM_i}: \{P_{SM_i}^l\}_l \rightarrow \{P_{SM_i}^l\}_l$ such that:
\begin{gather*}
\forall P_S^i, \forall A_j\in\mathcal{A} \quad K_j \phi[\mathcal{E}_S(P_S^i)] \Rightarrow K_j \phi[\mathcal{E}_{SM_i}\circ u(P_S^i)].
\end{gather*}
\end{definition}

See Figure \ref{fig:circuits_views} for an example. 
In general, the memory update map $u$ need not be reversible; for example, in box world it is an irreversible transformation, as we will see later. 

Note that the characterization of measurements introduced in this section is rather minimal. In physical theories like classical and quantum mechanics, measurements have other natural properties that we do not require here. Two striking examples are ``after her measurement, Alice's memory becomes correlated with the system measured in such a way that, for any subsequent operation that Bob could perform on the system, there is an equivalent operation he may perform on her memory'' and ``the correlations are such that there exists a joint operation on the system and Alice's memory that would allow Bob to conclude which measurement Alice performed.'' While these properties hold in the familiar classical and quantum worlds, we do not know of other physical theories where measurements can satisfy them, and they require Bob to be able to act independently on the system and on Alice's memory, which may not always be possible. For example, we will see that in box world, these two subsystems become \emph{superglued} after Alice's measurement, and that Bob only has access to them as a whole and not as individual components.  As such, we will not require these properties out of measurements, for now. We revisit this discussion in Section~\ref{sec:conclusions}.

\section{Box world: states and memories}
\label{sec:boxworld}
 Generalised probabilistic theories \cite{Hardy01, Barrett07} (GPTs) provide an an operational framework for describing probabilistic theories, including classical and quantum theories where the physical systems are taken as black boxes, characterized only by their input and output behaviour. The \emph{state} of a system is represented by a probability vector ${\bf P}$ that encodes the probabilities of possible outcomes given all the possible choices of measurement. This is a single-shot characterization of a system: the post-measurement state can be represented by a new probability vector, and the update rules depend on the specific theory. 
 
 In this paper, we employ the framework for information processing in GPTs presented by Barrett in \cite{Barrett07}, and use we the term ``box world'' to denote the set of theories that Barrett originally calls \emph{Generalised No-Signalling Theories}. We will derive the paradox in box world, which is a particular instance of a GPT. However, the general assumptions proposed in Section~\ref{sec:conditions} can also be applied to more general GPTs that do not obey the standard no signalling principle \cite{Grunhaus1996, Horodecki16} or that which obey different physical principles. We present here the minimal formalism needed to follow the argument; see  Appendix~\ref{appendix:boxes} for more details. 

\subsection{States and operations (review)}
\label{sssec: statesop}

\paragraph{Individual states.} The so-called generalised bit or \emph{gbit} is a system completely characterized by two binary measurements which can be performed on it~\cite{Barrett07}. Such sets of measurements that completely characterise the state of a system are known as \emph{fiducial measurements}. The state of a gbit is thus fully specified by the vector
\begin{equation}
\label{eq: gbit}
    \vec{P}_{gbit}=\left(
\begin{array}{c}
 P(a=0|X=0)\\
 P(a=1|X=0)\\
\hline
 P(a=0|X=1)\\
 P(a=1|X=1)\\
\end{array}
\right),
\end{equation}
where $X=0$ and $X=1$ represent the two choices of measurements and $a \in \{0,1\}$ are the possible outcomes (Figure~\ref{fig:gbit}).
Analogously, a classical bit is a system characterized by a single binary fiducial measurement, 
\begin{equation}
\label{eq: bit}
    \vec{P}_{bit}=\left(
\begin{array}{c}
 P(a=0|X=0)\\
 P(a=1|X=0)
\end{array}
\right),
\end{equation}

and, in quantum theory, a qubit is characterized by three fiducial measurements (corresponding, for example, to three directions $X$, $Y$ and $Z$ in the Bloch sphere), 
\begin{equation}
\label{eq: qubit}
    \vec{P}_{qubit}=\left(
\begin{array}{c}
 P(a=0|X=0)\\
 P(a=1|X=0)\\
\hline
 P(a=0|X=1)\\
 P(a=1|X=1)\\
 \hline
 P(a=0|X=2)\\
 P(a=1|X=2)\\
\end{array}
\right).
\end{equation}

\begin{figure}[t]
    \centering
    \begin{subfigure}{0.35\textwidth}
    \vspace{1cm}
    \includegraphics[width=\textwidth]{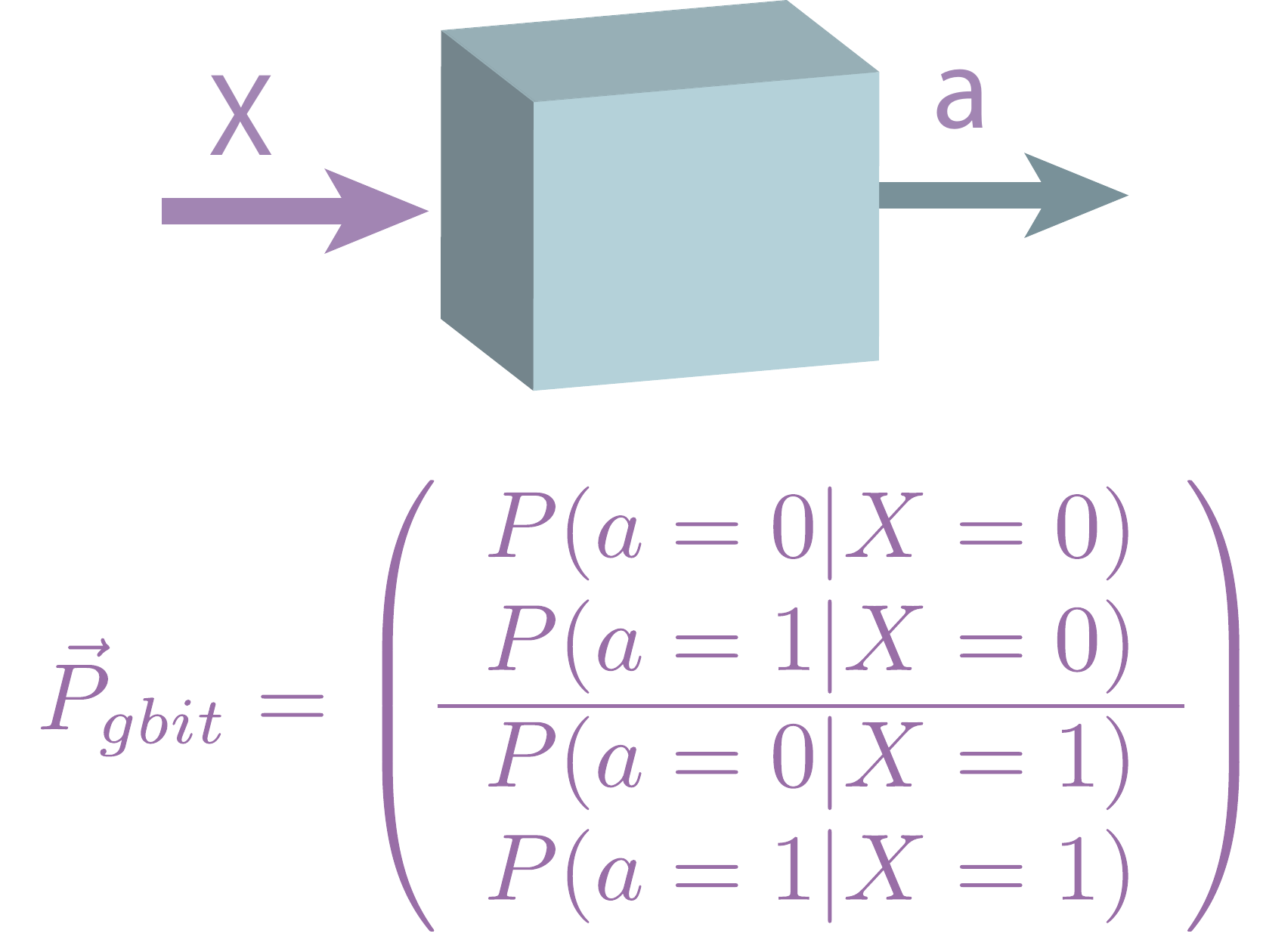}
    \caption{\textbf{G-bit.} A gbit is a function with binary input and output, characterized by the  probability vector $\vec P_{gbit}$, also called the state vector. }
    \label{fig:gbit}
    \end{subfigure}
    \quad 
\begin{subfigure}{0.6\textwidth}    \centering
    \includegraphics[width=\textwidth]{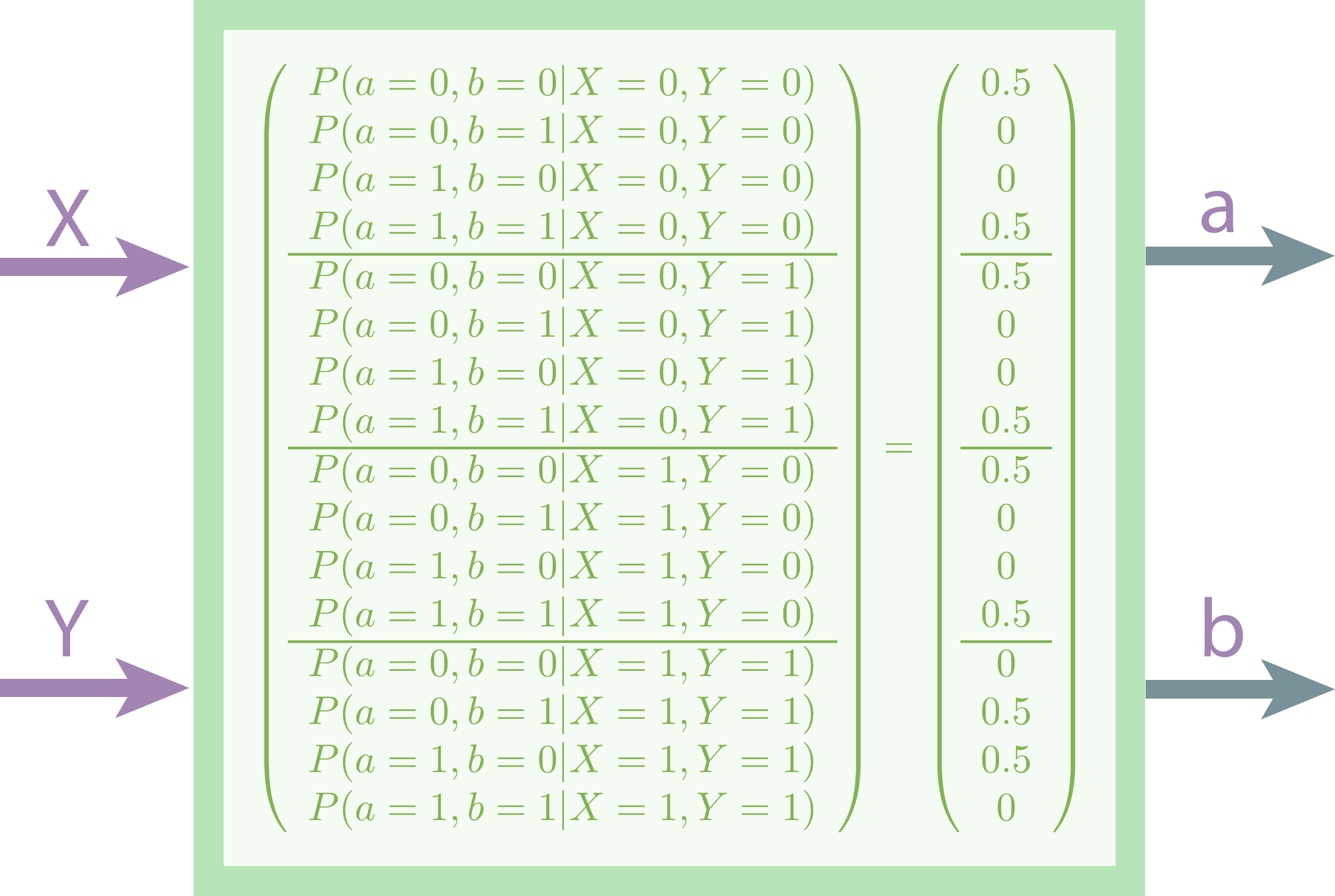}
    \caption{\textbf{PR box.} The PR box has two binary inputs $X, Y$ and two binary outputs $a,b$, satisfying $XY = a\oplus b$, and otherwise uniformly random (state vector on the right). Usually it is applied in the context of two space-like separated agents, each providing one of the inputs and obtaining the respective output. The box is non-signalling, and maximally violates the CHSH inequality~\cite{Popescu1994}.}
    \label{fig:PRbox}
\end{subfigure}

\caption{{\bf Boxes in Generalized Probabilistic Theories.} The modular objects of GPTs are input/output functions depicted as boxes and characterized by probability vectors. Each function (or box) can be evaluated once, and it may or not correspond to a physical system being probed; even if it is, nothing is assumed about the post-evalutation state of the system (unlike quantum theory, which specifies the post-measurement state of a system given its initial state and the measurement device). }
\label{fig:GTP_states}
\end{figure}

For normalized states, we have $|\vec{P}|=\sum_i P(a=i|X=j)=1,   \forall\, j$.
The set of possible states of a gbit is convex, with extremes
\begin{align}
\label{eq: gbitpure}
   \vec{P}_{00}=\left(
\begin{array}{c}
1\\
0\\
\hline
1\\
0\\
\end{array}
\right), \qquad
 \vec{P}_{01}=\left(
\begin{array}{c}
1\\
0\\
\hline
0\\
1\\
\end{array}
\right), 
\quad
   \vec{P}_{10}=\left(
\begin{array}{c}
0\\
1\\
\hline
1\\
0\\
\end{array}
\right),
\qquad
 \vec{P}_{11}=\left(
\begin{array}{c}
0\\
1\\
\hline
0\\
1\\
\end{array}
\right).
\end{align}
These correspond to pure states. In the qubit case, the extremes correspond to all the points on the surface of the Bloch sphere, for example
\begin{equation}
    \vec{P}_{\ket+}=\left(
\begin{array}{c}
 1\\
 0\\
\hline
 \nicefrac12 \\
\nicefrac12 \\
 \hline
 \nicefrac12 \\
 \nicefrac12 \\
\end{array}
\right), 
\qquad
    \vec{P}_{\ket-}=\left(
\begin{array}{c}
 0\\
 1\\
\hline
 \nicefrac12 \\
\nicefrac12 \\
 \hline
 \nicefrac12 \\
 \nicefrac12 \\
\end{array}
\right), 
\qquad
    \vec{P}_{\ket0}=\left(
\begin{array}{c}
 \nicefrac12 \\
 \nicefrac12 \\
\hline
 \nicefrac12 \\
 \nicefrac12 \\
 \hline
 1\\
 0\\
\end{array}
\right)\qquad
    \vec{P}_{\ket1}=\left(
\begin{array}{c}
 \nicefrac12 \\
 \nicefrac12 \\
\hline
 \nicefrac12 \\
 \nicefrac12 \\
 \hline
 0\\
 1\\
\end{array}
\right).
\end{equation}
Note that in box world, pure gbits are deterministic for both alternative measurements, whereas in quantum theory at most one fiducial measurement can be deterministic for each pure qubit, as reflected by uncertainty relations. 
We denote the set of allowed states of a system $A$ by $\mathcal{S}^A$.

\paragraph{Composite states.} The state of a bipartite system $AB$, denoted by $\vec{P}^{AB} \in \mathcal{S}^{AB}$ can be written in the form $\vec{P}^{AB}=\sum_i r_i\ \vec{P}^A_i\otimes \vec{P}^B_i$ where $r_i$ are real coefficients\footnote{Note that it is not necessary that the coefficients $r_i$ be positive and sum to one. If this is the case, then the composite state would be separable and hence local, otherwise, the state is entangled \cite{Barrett07}.} and $\vec{P}^A_i\in \mathcal{S}^A$, $\vec{P}^B_i\in \mathcal{S}^B$ can be taken to be pure and normalised states of the individual systems $A$ and $B$~\cite{Barrett07}. Thus, a general 2-gbit state $\vec{P}^{AB}_2$ can be written as in Figure~\ref{fig:PRbox} (left), where $X, Y\in \{0,1\}$ are the two fiducial measurements on the first and second gbit and $a,b \in \{0,1\}$ are the corresponding measurement outcomes. The PR box $\vec{P}_{PR}$, on the right, is an example of such a 2 gbit state that is valid in box world, which satisfies the condition $a\oplus b = x y$~\cite{Popescu1994}.

\paragraph{State transformations.} Valid operations are represented as matrices that transform valid state vectors to valid state vectors (Appendix~\ref{appendix:boxes}). In addition, we only have access to the (single-shot) input/output behaviour of systems, so in practice all valid operations in box world take the form of classical wirings between boxes, which correspond to pre- and post-processing of input and output values, and convex combinations thereof~\cite{Barrett07}.  
For example, 
 bipartite joint measurements on a 2-gbit system can be decomposed into convex combinations of classical ``wirings'', as shown in  Figure~\ref{fig: bipartitemeas}.
In contrast, quantum theory allows for a richer structure of bipartite measurements by allowing for entangling measurements (e.g.\ in the Bell basis), which cannot be decomposed into classical wirings. 
  Bipartite transformations on multi-gbit systems turn out to be  classical wirings as well \cite{Barrett07}. Reversible operations in particular  consist only of trivial wirings: local operations and permutations of systems \cite{Gross2010}. 
 One cannot perform entangling operations such as a coherent copy (the quantum CNOT gate) \cite{Barrett07, Short2006}, which is required in the original version of the Frauchiger-Renner experiment.
\begin{figure}[t]
    \centering
\begin{tikzpicture}

\draw[arrows={-stealth}] (2.2,0)--(2.2,-1); \draw[arrows={-stealth}] (6.2,0)--(6.2,-1);
\node[align=center] at (2.5,-0.5) {$X$}; \node[align=center] at (6.5,-0.5) {$Y$};
\draw (1.6,-2) rectangle node[align=center]{$\mathbf{A}$} (2.8,-1);
\draw (5.6,-2) rectangle node[align=center]{$\mathbf{B}$} (6.8,-1); 
\draw (5.3,0) rectangle node[align=center]{$Y=f_1(a)$} (7.1,1);\draw[dashed] (2.8,-1.5)--(5.6,-1.5); \draw[arrows={-stealth}] (6.2,-2)--(6.2,-3); \node[align=center] at (6.5,-2.5) {$b$}; \draw (5.6,-4.2) rectangle node{$o=f_2(a,b)$} (8.6,-3);
 \draw[arrows={-stealth}] (7.7,-2)--(7.7,-3);   \node[align=center] at (8,-2.5) {$a$};  
\draw[arrows={-stealth}] (7.1,-4.2)--(7.1,-5.2);  \node[align=center] at (7.1,-5.5) {$o$}; \draw (2.2,-2)--(2.6,-2.5); \draw (2.6,-2.5) to [out=315,in=135] (5.8,1.5); \draw[arrows={-stealth}] (5.8,1.5)--(6.2,1);
\node[align=center] at (2.2,-2.4) {$a$};
\end{tikzpicture}
    \caption{{\bf Bipartitite measurements in boxworld.} Any bipartite measurement on a 2-gbit box world system can be decomposed into a procedure (or convex combinations thereof) of the following form. Alice first performs a measurement $X$ on one of the gbits (labelled $A$), and forwards the outcome $a$ to Bob. Bob then performs a measurement $Y=f_1(a)$, which may depend on $a$, on the other gbit (labelled $B$), obtaining the outcome $b$. The final measurement outcome $o$ of the joint measurement can be computed by Bob as a function $f_2(a,b)$. All allowed bipartite measurements are convex combinations of this type of classical wirings~\cite{Barrett07}.}
    \label{fig: bipartitemeas}
\end{figure}
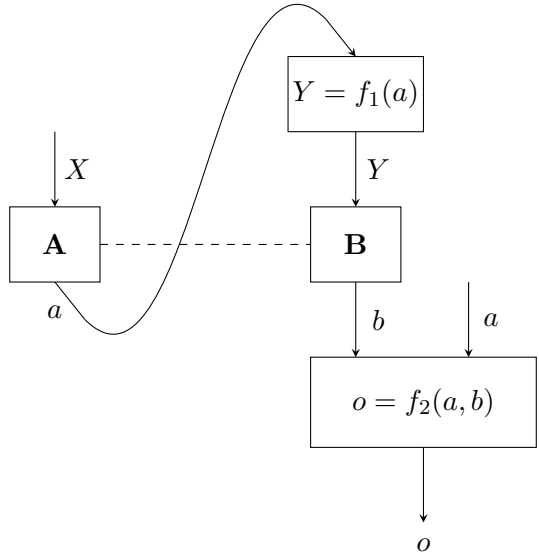

\subsection{Agents, memory and measurement in box world}

We will now instantiate our general conditions for agents, memories and measurements (definitions \cref{def:measurements,def:agents,def:memory_update}) in box world. As there is no physical theory for the dynamics behind box world, there is plenty of freedom in the choice of implementation. In principle each such choice could represent a different physical theory leading to the same black-box behaviour in the limit of a single agent with an implicit memory. This is analogous to the way in which different versions of quantum theory (Bohmian mechanics, collapse theories, unitary quantum mechanics with von Neumann measurements) result in the same  effective theory in that limit.

\begin{definition}[Agents in box world]
\label{def:agentsbox}
Let $\mathbb T$ be the theory that describes box world, according to~\cite{Barrett07}. 
As per definition~\ref{def:agents}, an agent $A_i \in\mathcal{A}$ is described by a knowledge operator $K_i\in\mathcal{K_\mathcal{A}}$ and a physical memory $M_i\in\mathcal{M_\mathcal{A}}$. 

We will focus on the case where the memory consists of bit or gbits. Each agent may study other systems according to the theory $\mathbb T$. An agent's memory records the results and the consequences of the studies conducted by them, and may be an object of a study by other agents.
\end{definition}

It is worth mentioning that boxes do not correspond to physical systems, but to input/output functions that can only be evaluated once. As such, the post-measurement state of a physical system is described by a whole new box. The notion of an individual system itself, as we will see, may be unstable under measurements --- some measurements \emph{glue} the system to the observer's memory, in a way that makes individual access to the original system impossible.

\begin{figure}
    \centering
    \includegraphics[width=0.5\textwidth]{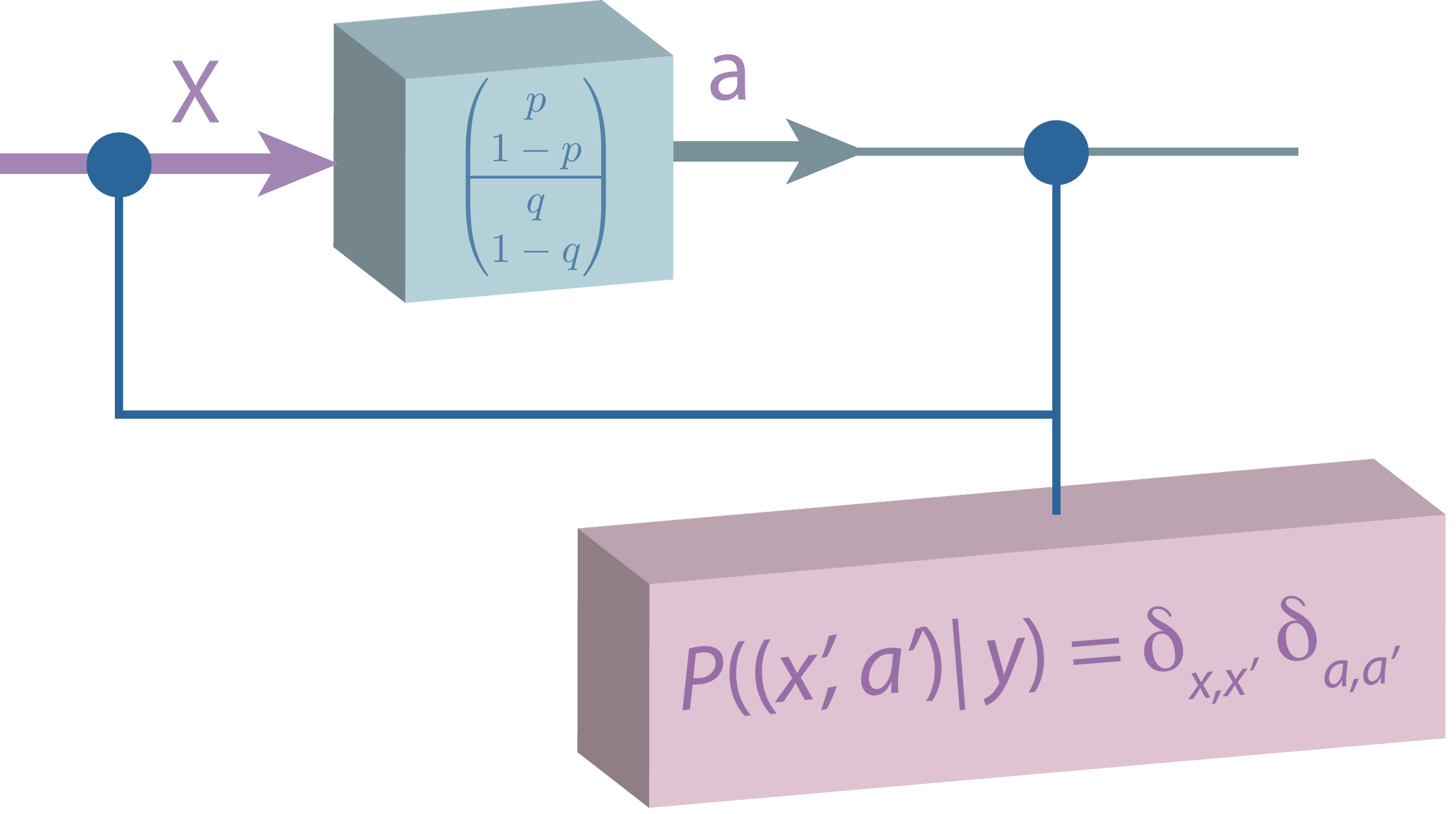}
    \caption{\textbf{Measurement: observer's perspective.} An agent Alice measures a system with measurement setting $X$, and obtains outcome $a$ with a given probability. In the language of GPTs this corresponds to running the box that encodes the measurement statistics. Alice may then the measurement data (input and output) to memory. If this is a classical memory, like a notebook, the procedure corresponds to preparing a new box (to be run later by herself), which outputs the pair $(X,a)$ deterministically.}
    \label{fig:measurement_Alice}
\end{figure}

\paragraph{Measurement: observer's perspective.} From the point of view of the observer who is measuring (say Alice), making a measurement on a system corresponds simply to running the box whose state vector encodes the measurement statistics. 
Alice may then commit the result of her measurement to a physical memory, like a notebook where she writes `I measured observable $X$ and obtained outcome $a$.' To be useful, this should be a memory that may be consulted later, i.e.\ it could receive an input $Y=$`open and read the memory', and output the pair $(X,a)$. In the language of GPTs, this means that Alice, from her own perspective, prepares a new box with one input $Y$ and two outputs $(X',a')$, with the behaviour $\vec P((X',a')|Y) = \delta_{X, X'} \delta_{a,a'}$, which depends on her observations (Figure~\ref{fig:measurement_Alice}). She may later run this box (look at her notebook) and recover the measurement data. The exact dimension of the box will depend on how Alice perceives and models her own memory; for example it could consist of two bits, or two gbits, or, if we think that before the measurement she stored the information about the choice of observable elsewhere, it could be a single bit or gbit encoding only the outcome. We leave this open for now, as we do not want to constrain the theory too much at this stage. 

\begin{figure}[t!]
    \centering
   \begin{subfigure}{0.45\textwidth}
    \includegraphics[width=\textwidth]{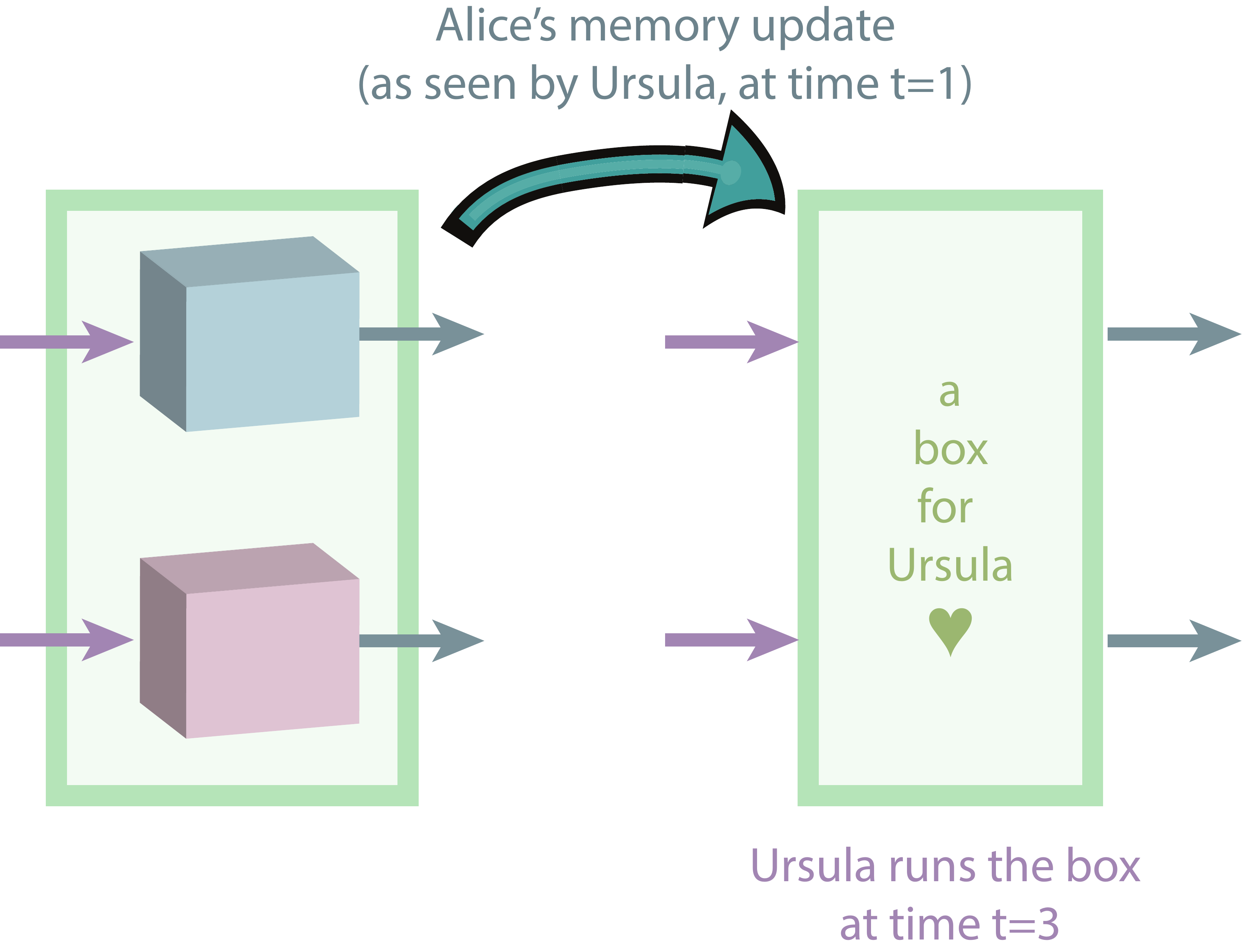}
    \caption{Generally, in GPTs with some notion of subsystems, Ursula can think of the physical system measured by Alice, and Alice's memory pre-measurement as two boxes, which Ursula could in principle run if Alice chose not to measure (left).  From Ursula's perspective, Alice's measurement corresponds to some transformation that results on a final state on which Ursula can later act. This final state can be represented by a new box available to Ursula, which will have in principle a different behaviour, depending on the concrete physical theory.}
    \label{fig:update_general}
\end{subfigure}
\qquad
\begin{subfigure}{0.45\textwidth}
    \includegraphics[width=\textwidth]{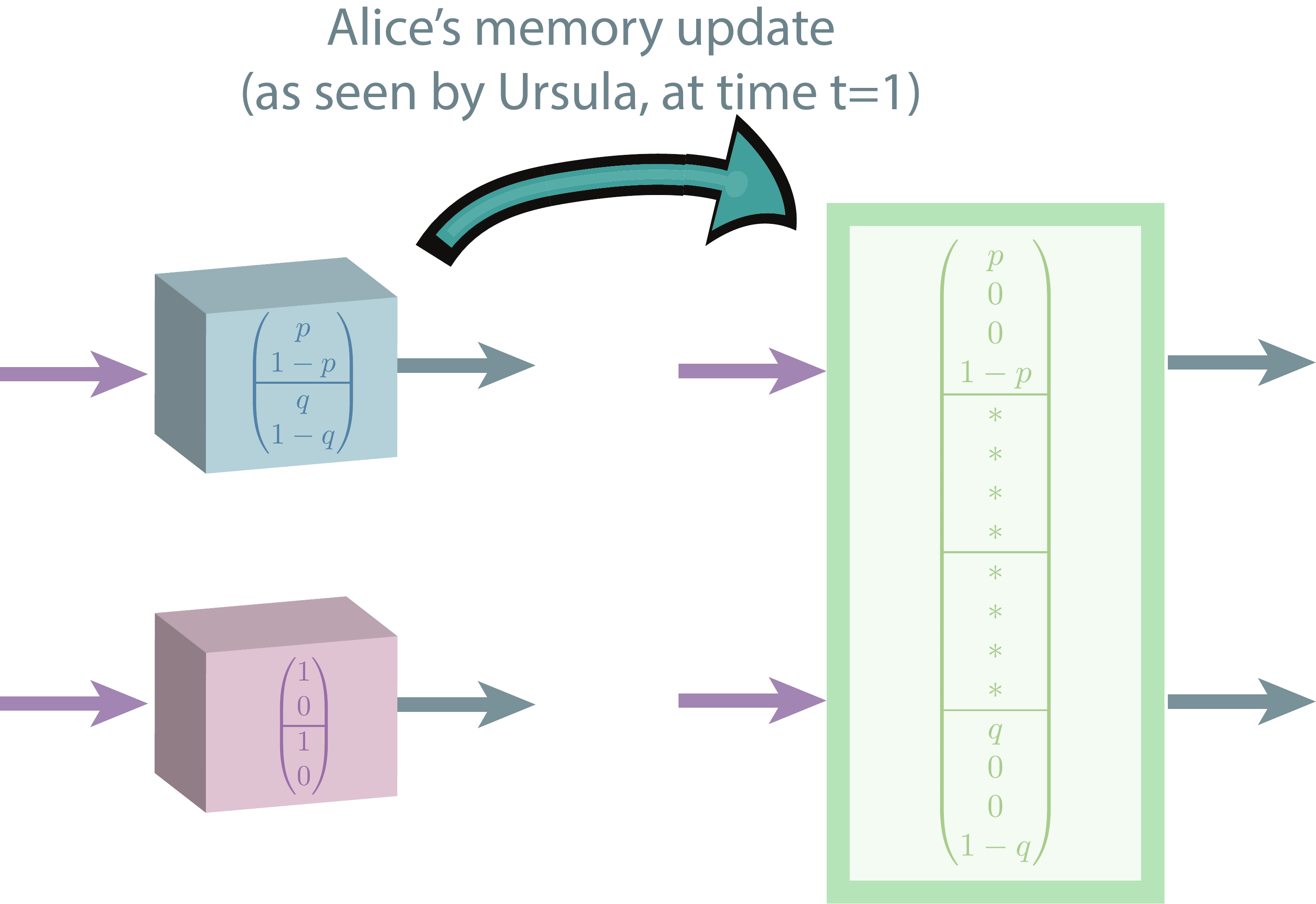}
    \caption{In  box world, if the two initial systems correspond to small gbit boxes, and Alice's memory is initialized as shown, and if we want to preserve the global system dimensions, then the rules for allowed transformations limit the statistics of Ursula's final box to be of the form shown in the right (Appendix~\ref{appendix:proofs}). The asterisks represent arbitrary values, which will depend on the choice of implementation of Alice's measurement. This transformation is in principle non-reversible: note that in the final box, Ursula cannot address system and memory independently, but only the global, \emph{superglued} box.}
    \label{fig:update_conditions}

\end{subfigure}
\caption{{\bf Memory update after a measurement: an outsider's perspective.} Here Alice makes a measurement of a system  (blue, top) at time $t=1$ and stores her outcome in her memory (pink, bottom). The question is how an outsider, Ursula, models Alice's measurement. In particular, what can Ursula do with the post-measurement state?}
    \label{fig:update_basics}
\end{figure}

\paragraph{Measurements: inferences.}
To see the kind of inferences and conclusions that an agent can take from a measurement in box world, it's convenient to look at the example where Alice and Bob share a PR box. Suppose that Alice measured her half of the box with input $X=1$ and obtained outcome $a=0$. From the PR correlations, $XY=a\oplus b$, she can conclude that if Bob measures $Y=0$, he must obtain $b=0$, and if he measures $Y=1$, he must obtain $b=1$. This is independent  \\
\begin{figure}[H]
    \centering
   \begin{subfigure}{0.4\textwidth}
    \includegraphics[width=\textwidth]{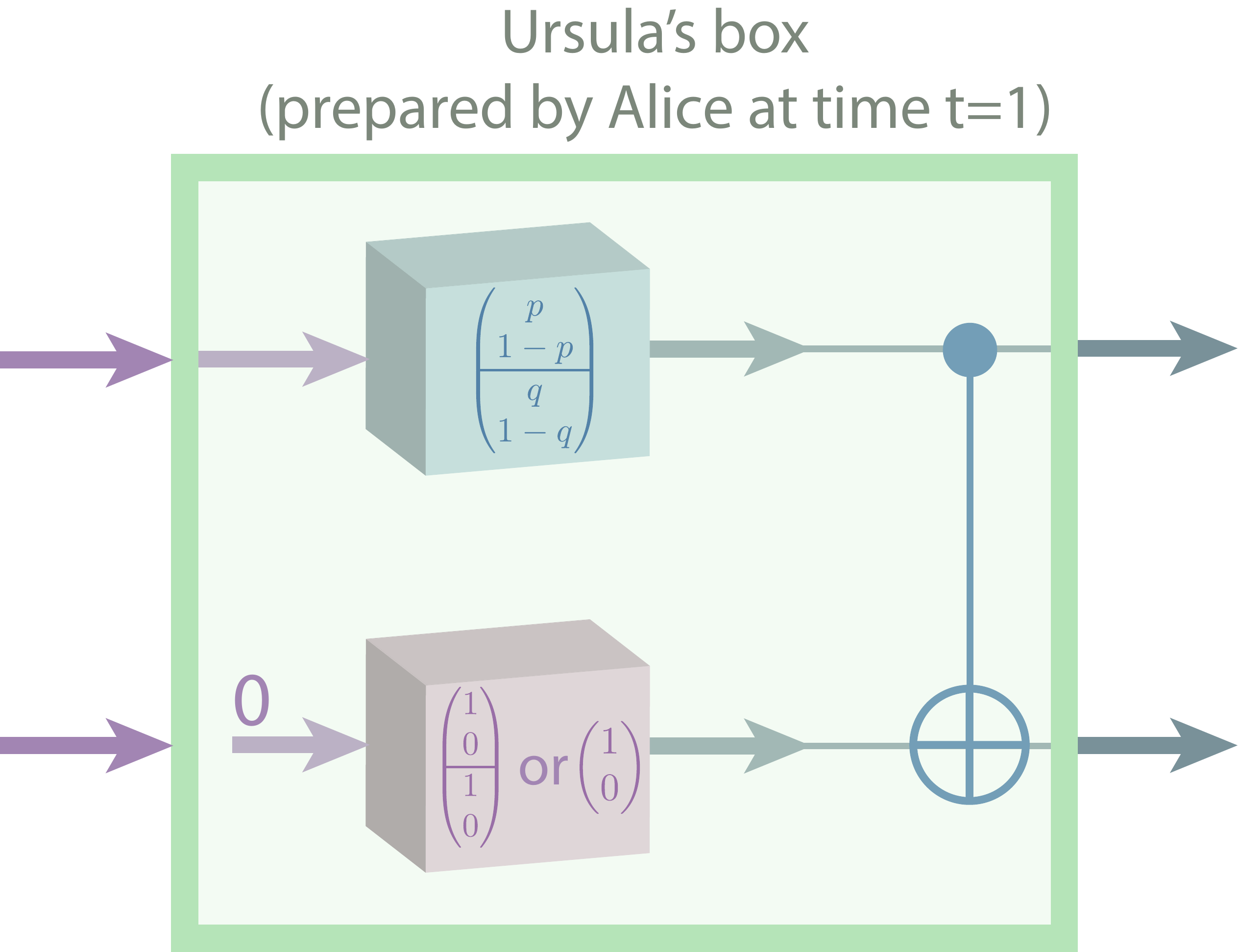}
    \caption{From Ursula's perspective, Alice has not yet run the boxes corresponding to the system measured and her memory; she simply wired the outputs of the two boxes with a controlled-{\sc{not}} gate, so that  the measurement output is copied to the output of the memory. This is analogous to the quantum case, where from Ursula's view Alice has not performed a projective measurement, but simply entangled system and memory.
   When Ursula later runs the outer green box, she provides  two inputs, which go through the circuit shown, resulting in two identical outputs.}
    \label{fig:update_copy}
\end{subfigure}
\qquad
\begin{subfigure}{0.5\textwidth}
    \includegraphics[width=\textwidth]{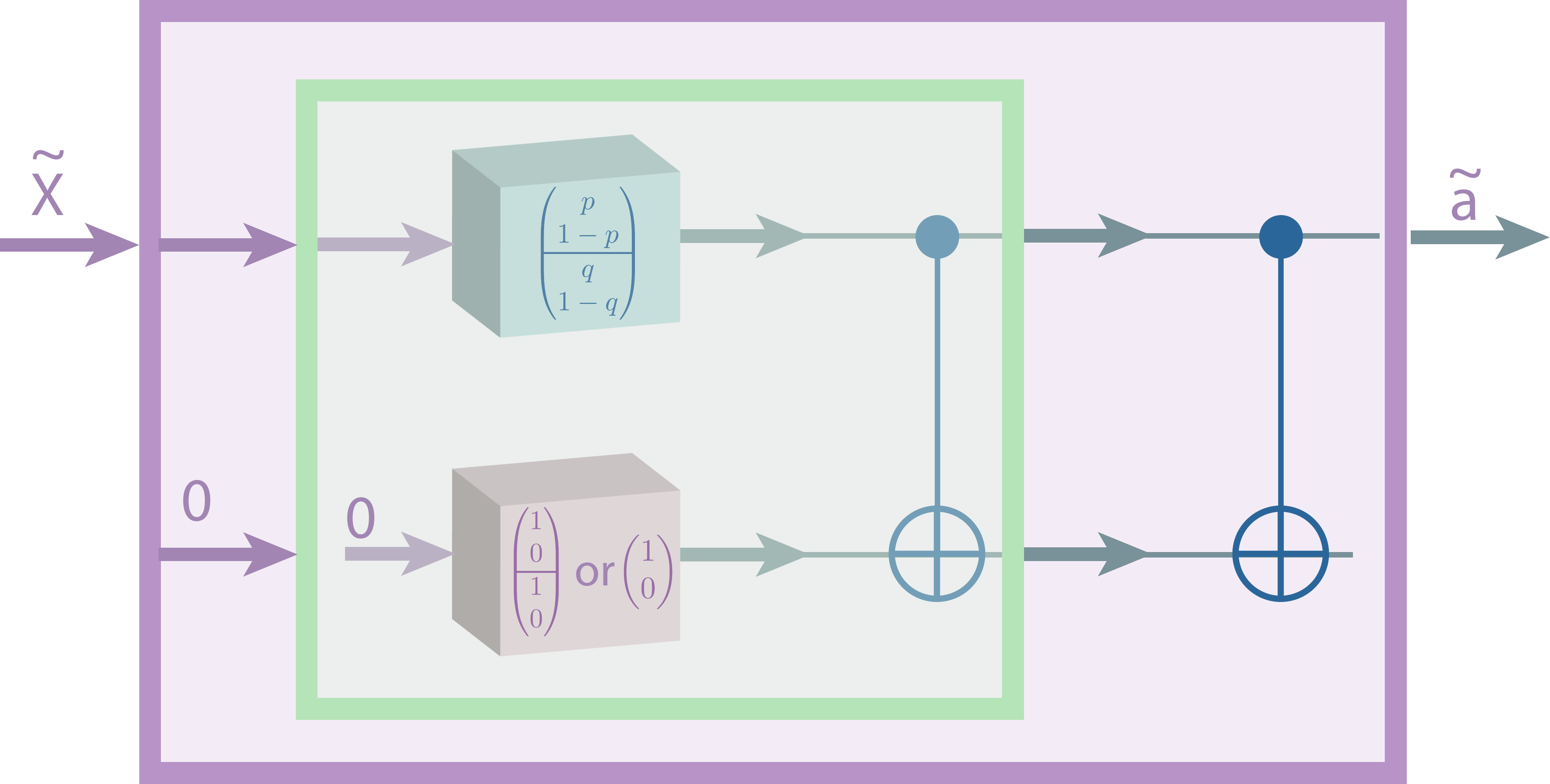}
    \caption{In order for the model to be information-preserving, we need Ursula to be able to do some pre- and post-processing (outer pink box), such that the final box has the same behaviour as the initial state of the system measured by Alice (inner blue box on top). This is achieved, for example, by Ursula fixing her second input to 0, and undoing the controlled-{\sc{not}} gate at the end, discarding the second (trivial) output. The result is a box with binary input $\tilde X$ and binary output $\tilde a$, which has the desired behaviour. This property carries on to bipartite scenarios where Alice measures half of a joint state.}
    \label{fig:update_ursula}
\end{subfigure}
\caption{\textbf{Information-preserving memory update.} This (trivial) physical implementation of Alice's measurement in box world satisfies the conditions of Figure~\ref{fig:update_conditions} and is information-preserving, in the sense that an external agent, Ursula, can run the final box as if it were the original, pre-measurement state of the system that Alice measured, in analogy to the quantum case (Figure~\ref{fig:circuits_views}). The crucial detail is that Ursula is not allowed to open her box (in green) and access the circuitry inside.  Note that there are other possibilities for modelling measurements --- this is the simplest one that still allows us to derive the paradox. For example, the choice of keeping two binary inputs in Ursula's box and discarding the second one (replacing it with 0) is an arbitrary one, picked for simplicity. Details and proofs in Appendix~\ref{appendix:proofs}.}
    \label{fig:update_implementation}
\end{figure}

\noindent of whether Bob's measurement happens before or after Alice (or even space-like separated).
She could reach similar deterministic conclusions for her other choice of measurement and possible outcomes. 
In the language of Definition~\ref{def:measurements}, we have 
\begin{align*}
    \phi_{X=0, a=0} &= ``[Y=0 \implies b=0] \wedge [Y=1 \implies b=0]", \\
    \phi_{X=0, a=1} &= ``[Y=0 \implies b=1] \wedge [Y=1 \implies b=1]", \\
    \phi_{X=1, a=0} &= ``[Y=0 \implies b=0] \wedge [Y=1 \implies b=1]", \\
    \phi_{X=1, a=1} &= ``[Y=0 \implies b=1] \wedge [Y=1 \implies b=0]".
\end{align*}

\paragraph{Measurement: memory update from an outsider's perspective.}
Next we need to model how an outsider agent, Ursula, models Alice's measurement, in the case where Alice does not communicate her outcome to Ursula.\footnote{Naming convention: as we will see in Section~\ref{sec:paradox}, in the proposed experiment we have two ``internal'' agents, Alice and Bob, who will in turn be measured by two ``external'' agents, Ursula and Wigner respectively. Ursula is named after Le Guin. In the example of Section~\ref{sec:conditions} the internal agent was Alice and the external Bob, so that their different pronouns could help keep track of whose memory we were referring to, but we trust that the reader has got a handle on it by now.} 
Suppose that all agents share a time reference frame, and Alice makes her measurement at time $t=1$. From Ursula's perspective, in the most general case, this will correspond to Alice preparing a new box, with some number of inputs and outputs, which Ursula can later run (Figure~\ref{fig:update_general}). The exact form of this box will depend on the underlying physical theory for measurements: in the quantum case it corresponds to a box with the measurement statistics of a state that's entangled between the system measured and Alice's memory, as we saw. In classical mechanics, it will correspond to perfect classical correlations between those two subsystems. In a theory of very destructive measurements, it could be that Alice's post-measurement state is trivial from the point of view of Ursula and the resulting box is void. Now suppose that we would like to have a physical theory where the dimension of systems is preserved by measurements: for example, if the system that Alice measures is instantiated by a box with binary input and output (e.g.~a gbit, or half of a PR-box), and Alice's memory, where she stores the outcome of the measurement (as in Figure~\ref{fig:measurement_Alice}) is also represented as a gbit, then we would want the post-measurement box accessible to Ursula to have in total two binary inputs and two binary outputs (or more generally, four possible inputs and four possible outputs). Note that this is not a required condition for a theory to be \emph{physical} per se --- it is just a familiar rule of thumb that gives some persistent meaning to the notion of subsystems and dimensions. In such a theory that supports box world correlations, we find that the allowed statistics of Ursula's box must satisfy the conditions of Figure~\ref{fig:update_conditions} (proof in Appendix~\ref{appendix:proofs}). These conditions still leave us some wiggle room for possible different implementations. 

\paragraph{Measurements: information-preserving memory update.}
In order to find a multi-agent paradox, we will need a model of memory update that is information-preserving, in the sense of Definition~\ref{def:memory_update}.  This does not imply that Alice's transformation (as seen by Ursula) be reversible: in fact, it will \emph{glue} two subsystems such that Ursula will only be able to address them as a whole, but the relevant fact is that Ursula can apply some post-processing in order to obtain a new box with the same behaviour as the pre-measurement system that Alice observed. In Figure~\ref{fig:update_implementation} we give an example of a model that satisfies these conditions, in addition to the conditions of Figure~\ref{fig:update_conditions}. As we foreshadowed, this model is not completely satisfying from a physical standpoint: it looks rather trivial  (a post-processing of classical outputs); the super-gluing is postulated rather than naturally emergent; and, unlike quantum von-Neumann measurements, it does not give Ursula information about the nature of Alice's measurement. It helps to think of it as one minimal implementation among many possible, which already allows us to derive a paradox. We discuss these limitations and alternatives in Section~\ref{sec:discussion}. 
What is important here (and proven in Appendix~\ref{appendix:proofs}) is that this model generalizes to the case where Alice measures half of a bipartite state, like a PR box. That is, suppose that Alice and Bob share a PR box. Imagine that at time $t=1$ Alice makes her measurement $X$, obtaining (from her perspective) an outcome $a$, and that Bob makes his measurement $Y$ at time $t=2$, obtaining outcome $b$. As usual, if Alice and Bob were to communicate at this point, they would find that $XY=a\oplus b$, and indeed the propositions $\phi_{X,a}$ and $\phi_{Y,b}$ that represent their subjective measurement experience would hold. But now suppose that Alice and Bob do not get the chance to communicate and compare their input and outputs; instead, at time $t=3$, an observer Ursula, who models Alice's measurement as in Figure~\ref{fig:update_copy}, runs the box corresponding to Alice's half of the PR box and Alice's memory, and applies the post-processing of Figure~\ref{fig:update_ursula}. Ursula's input is $\tilde X$ and her output is $\tilde a$. Then the claim is that $\tilde XY=\tilde a\oplus b$: that is, Ursula and Bob effectively share a PR box. This is proven in Appendix~\ref{appendix:proofs}. 
We now have all the ingredients needed to find a multi-agent epistemic paradox in box world. 

\section{Finding the paradox}
\label{sec:paradox}

In this section we find a scenario in box world where reasoning, physical agents reach a logical paradox. We compare it to the result to the contradiction obtained by Frauchiger and Renner \cite{Frauchiger2018} in the next section.

\paragraph{Experimental setup.} The proposed thought experiment is similar in spirit to the one proposed by Frauchiger and Renner~\cite{Frauchiger2018} (recall Figure~\ref{fig:circuits}). Alice and Bob share a PR box (the corresponding box world state is given in Figure~\ref{fig:PRbox}); they each will measure their half of the PR box and store the outcomes in their local memories. Let Alice's lab be located inside the lab of another agent, Ursula's lab such that Ursula can now perform joint measurements on Alice's system (her half of the PR box) and memory, as seen in the previous section. Similarly, let Bob's lab be located inside Wigner's lab, such that Wigner can perform joint measurements on Bob's system and memory. We assume that Alice's and Bob's labs are isolated such that no information about their measurement outcomes leaks out. 
The protocol, as shown in Figure~\ref{fig:circuitPRbox}, is the following:
\begin{itemize}
    \item[\bf t=1] Alice measures her half of the PR box, with measurement setting $X$, and stores the outcome $a$ in her memory $A$. 
    \item[\bf t=2] Bob measures his half of the PR box, with measurement setting $Y$, and stores the outcome $b$ in his memory $B$. 
    \item[\bf t=3] Ursula measures the box corresponding to Alice's lab (as in Figure~\ref{fig:update_ursula}), with measurement setting $\tilde X = X\oplus 1$, obtaining outcome $\tilde a$. 
    \item[\bf t=4] Wigner measures the box corresponding to Bob's lab, with measurement setting $\tilde Y = Y\oplus 1$, obtaining outcome $\tilde b$. 
\end{itemize}
Agents can agree on their measurement settings beforehand, but should not communicate once the experiment begins. The trust relation, which specifies which agents consider each other to be rational agents (as opposed to mere physical systems), is 
\begin{align*}
    A_{t=1,2} &\leftrightsquigarrow B_{t=1,2} \\ B_{t=2,3} &\leftrightsquigarrow U_{t=3} \\ U_{t=3,4} &\leftrightsquigarrow W_{t=4}\\  W_{t=4} &\leftrightsquigarrow A_{t=1}.
\end{align*}
The common knowledge $\mathbbm T$ shared by all four agents includes the PR box correlations, the way the external agents model Alice and Bob's measurements, and the trust structure above.  

\paragraph{Reasoning.} Now the agents can reason about the events in other agents' labs.
We take here the example where the measurement settings are $X=Y=0, \tilde X = \tilde Y=1$, and where Wigner obtained the outcome $\tilde b=0$; the reasoning is analogous for the remaining cases. 
\begin{enumerate}
    \item \emph{Wigner reasons about Ursula's outcome.} At time $t=4$, Wigner knows that, by virtue of their information-preserving modelling of Alice and Bob's measurements, he and Ursula effectively shared a PR box \footnoteremember{proof}{See Appendix~\ref{appendix:proofs} for a proof.}. 
    He can therefore use the PR correlations $\tilde X \tilde Y = \tilde a \oplus \tilde b$ to conclude that Ursula's output must be $1$, 
    $$ K_W (\tilde{b}=0 \implies \tilde a=1).$$
    
    \item \emph{Wigner reasons about Ursula's reasoning.} Now Wigner thinks about what Ursula may have concluded regarding Bob's outcome. He knows that at time $t=3$, Ursula and Bob effectively shared a PR box\footnoterecall{proof}, satisfying $\tilde X Y = \tilde a \oplus b$, and that therefore Ursula must have concluded
    $$ K_W K_U (\tilde{a}=1 \implies b=1).$$
    
    \item \emph{Wigner reasons about Ursula's reasoning about Bob's reasoning.} Next, Wigner wonders ``What could Ursula, at time $t=3$, conclude about Bob's reasoning at time $t=2$?" Well, Wigner knows that she knows that Bob knew that at time $t=2$ he effectively shared a PR box with Alice, satisfying $XY=a\oplus b$, and therefore concludes
     $$ K_W K_U K_B (b=1 \implies a=1).$$
     
    \item \emph{Wigner reasons about Ursula's reasoning about Bob's reasoning about Alice's reasoning.} We are almost there. Now Wigner thinks about Alice's perspective at time $t=1$, through the lenses of Bob (at time $t=2$) and Ursula ($t=3$). Back then, Alice knew that she obtained some outcome $a$, and that Wigner would model Bob's measurement in an information-preserving way, such that Alice (at time $t=1$) and Wigner (of time $t=4$) share an effective PR box\footnoterecall{proof}, satisfying $X \tilde Y = a \oplus \tilde b$, which results, in particular, in
    $$ K_W K_U K_B K_A (a=1 \implies \tilde b=1).$$
    
    \item \emph{Wigner applies trust relations.} In order to combine the statements obtained above, we need to apply the trust relations described above, starting from the inside of each proposition, for example, 
    \begin{align*}
         K_W K_U K_B K_A &(a=1 \implies \tilde b=1) \\
         \implies K_W K_U K_B  &(a=1 \implies \tilde b=1) \qquad \lidia{A \leadsto B}\\
         \implies K_W K_U  &(a=1 \implies \tilde b=1) \qquad \lidia{B \leadsto U}\\
         \implies K_W &(a=1 \implies \tilde b=1), \qquad \lidia{U \leadsto W}
    \end{align*}
    and similarly for the other statements, so that we obtain 
    \begin{align*}
         & K_W  \big[ (\tilde{b}=0 \implies \tilde a=1) \wedge (\tilde{a}=1 \implies b=1) \wedge  (b=1 \implies a=1) \wedge (a=1 \implies \tilde b=1) \big] \\
    \implies &K_W (\tilde{b}=0 \implies \tilde b =1).
    \end{align*}
        
\end{enumerate}
We could have equally taken the point of view of any other observer, and from any particular outcome or choice of measurement, and through similar reasoning chains reached the following contradictions, 
\begin{align*}
    &K_A [ ({a}=0 \implies a =1) \wedge ({a}=1 \implies  a =0) ],\\
    &K_B [ (b=0 \implies b =1) \wedge (b=1 \implies b =0) ],\\
    &K_U [ (\tilde{a}=0 \implies \tilde a =1) \wedge (\tilde{a}=1 \implies \tilde a =0) ],\\
    &K_W [ (\tilde{b}=0 \implies \tilde b =1) \wedge (\tilde{b}=1 \implies \tilde b =0) ].
\end{align*}

\section{Discussion}
\label{sec:conclusions}
We have generalized the conditions of the Frauchiger-Renner theorem and made them applicable to arbitrary physical theories, including the framework of \emph{generalized probability theories}. We then applied these conditions to the GTP of box world and found an experimental setting that leads to a multi-agent epistemic paradox.

\subsection{Comparison with the quantum thought experiment}

We showed that box world agents reasoning about each others' knowledge can come to a deterministic contradiction, which is stronger than the original paradox, as it can be reached without post-selection, from the point of view of every agent and for any measurement outcome obtained by them.

\paragraph{Strong contextuality and post-selection.}
In contrast to the original Frauchiger-Renner experiment of \cite{Frauchiger2018}, no post-selection was required to arrive at this contradictory chain of statements as, in fact, all the implications above are symmetric, for example
$$\tilde{a}=0 \iff b=0 \iff a=0 \iff \tilde{b}=0 \iff \tilde{a}=1.$$
As a result, one can arrive at a similar (symmetric) paradoxical chain of statements irrespective of the choice of agent and outcome for the first statement. In other words, irrespective of the outcomes observed by every agent, each agent will arrive at a contradiction when they try to reason about the outcomes of other agents. This is because, as shown in \cite{Abramsky15}, the PR box exhibits strong contextuality and no global assignments of outcome values for all four measurements exists for any choice of local assignments. In contrast, the original paradox of \cite{Frauchiger2018} admits the same distribution as that of Hardy's paradox \cite{Hardy1993}. It is shown in \cite{Abramsky15} that this distribution is an example of logical contextuality where for a particular choice of local assignments (the ones that are post-selected on in the original Frauchiger-Renner experiment), a global assignment of values compatible with the support of the distribution fails to exist, but this is not true for all local assignments. This makes the paradox even stronger in box world, since it can be found without post-selection and by any of the agents, for any outcome that they observe. In particular, the paradox would already arise in a single run of the experiment. For a simple method to enumerate all possible contradictory statements that the agents may make, see the analysis of the PR box presented in \cite{Abramsky15}. A detailed analysis of the relation between Frauchiger-Renner type paradoxes and contextuality will appear in future work.

\paragraph{Communication \emph{vs} prepare-and-measure.}
One might note that a distinction between our proposal and  the original Frauchiger-Renner experiment is that there is no communication between Alice and Bob in our PR box version. However, the original quantum  scenario can be replaced by a protocol where Alice and Bob receive an appropriately prepared quantum state and perform measurements on it without communicating to each other, and the original paradox would still hold in such a case (Figure~\ref{fig:circuitFR}).

\subsection{Physical measurements in box world}
\label{sec:discussion}

Since we lack a physical theory to explain how measurements and transformations are instantiated for generalised non-signalling boxes, and only have access to their input/output behaviour, all allowed transformations consist of pre- and post-processing. In the quantum case, we have in addition to a description of possible input-output correlations, a mathematical framework for the underlying states producing those correlations, the theory of von-Neumann measurements and transformations as CPTP maps. In Appendix~\ref{appendix:fr} we briefly show how we one could in principle model the quantum memory updates in the framework of GPTs. 
In box world, introduction of dynamical features (for example, a memory update algorithm)  is less intuitive and requires additional constructions. In the following, we outline the main limitations we found.

\paragraph{Systems vs boxes.} In quantum theory, a system corresponds to a physical substrate that can be acted on more than once. For example, Alice could measure a spin first in the $Z$ basis and then in $X$ basis (obviously with different results than if she had measured first $X$ and then $Z$). The predictions for each subsequent measurement are represented by a different box in the GPT formalism, such that each box encodes the current state of the system in terms of the measurement statistics of a tomographically complete set of measurements. After each measurement, the corresponding box disappears, but quantum mechanics gives us a rule to compute the post-measurement state of the underlying system, which in turn specifies the box for future measurements. On the other hand, the default theory for box world lacks the notion of underlying physical systems and a definite rule to compute the post-measurement vector state of something that has been measured once.  Indeed, Equations~\ref{eq: op1}-\ref{eq: op3} (Appendix~\ref{appendix:boxes}) tell us that post-measurement states is only partially specified: for instance, if  the measurement performed was fiducial, we know that the block corresponding to that measurement in the post-measurement state would have a ``1'' corresponding to the outcome obtained and ``0'' for all other outcomes in the block. However, we still have freedom in defining the entries in the remaining blocks.
Our model proposes a possible physical mechanism for updating boxes (which could be read as updating the state of the underlying system), but so far only for the case where we compare the perspectives of different agents, and we leave it open whether Alice has a subjective update rule that would allow her to make subsequent measurements on the same physical system.

\paragraph{Verifying a measurement.} In our simple model, the external observer Ursula has no way to know which measurement Alice performed, or whether she measured anything at all --- the connection between Alice's and Ursula's views is postulated rather than derived from a physical theory. Indeed,  Alice could have simply wired the boxes as in Figure \ref{fig:update_copy} without actually performing the measurement, and Ursula will not know the difference: she obtains the same joint state of Alice's memory and the system she measured.
In contrast, consider the case of quantum mechanics with standard von Neumann measurements. There, Alice's memory gets entangled with the system, and the post-measurement state depends on the basis in which Alice measured her system. For example, if Alice's qubit $S$ starts off in the normalised pure state $\ket{\psi}=\alpha\ket{0}_S+\beta\ket{1}_S$ and her memory $M$ initialised to $\ket{0}_M$, the initial state of her system and memory from Ursula's perspective is $\ket{\Psi}^{in}_{SM}= [\alpha\ket{0}_S+\beta\ket{1}_S]\otimes \ket{0}_M=[(\frac{\alpha+\beta}{\sqrt{2}})\ket{+}_S+(\frac{\alpha-\beta}{\sqrt{2}})\ket{-}_S)]\otimes \ket{0}_M$
If Alice measures the system in the $Z$ basis, the post-measurement state from Ursula's perspective is $ \ket{\Psi}^{out,Z}_{SM}= \alpha\ket{0}_S\ket{0}_M+\beta\ket{1}_S\ket{1}_M$, which is an entangled state. If instead, Alice measured in the Hadamard (X) basis, the post-measurement state would be $ \ket{\Psi}^{out,X}_{SM}
   =(\frac{\alpha+\beta}{\sqrt{2}})\ket{+}_S\ket{0}_M+(\frac{\alpha-\beta}{\sqrt{2}})\ket{-}_S\ket{1}_M$. Clearly the measurement statistics of $\ket{\Psi}^{in}_{SM}$, $ \ket{\Psi}^{out,Z}_{SM}$ and  $\ket{\Psi}^{out,X}_{SM}$ are different and Ursula can thus (in principle, with some probability) tell whether or not Alice performed a measurement and which measurement was performed by her. 
In the absence of a physical theory backing box world, we can still lift this degenerancy between the three situations (Alice didn't measure, she measured $X=0$, or she measured $X=1$) by adding another classical system to the circuitry of~\ref{fig:update_copy}: for example, a trit that stores what Alice did, and which Ursula could consult independently of the glued box of system and Alice's memory. However, we'd still have a postulated connection between what's stored in this trit and what Alice actually did, and not one that is physically motivated.

\paragraph{\emph{Supergluing} of non-signalling boxes.}
For the memory update circuit (from Ursula's perspective) of Figure \ref{fig:update_copy}, and the initial state of Equation~\ref{eq: PSMin}, the final state would be $\vec{P}_{fin}^{SM}=(p\quad 0\quad 0\quad 1-p|p\quad 0\quad 0\quad 1-p|q\quad 0\quad 0\quad 1-q|q\quad 0\quad 0\quad 1-q)^T_{SM}$. Note that while the reduced final state of $S$ does not depend on the input $X'$ to $M$, the reduced final state on Alice's memory $M$, $\vec{P}_{fin}^{M}$ clearly depends on the input $X$ of the system $S$ if $p\neq q$. If $X=0$, $\vec{P}^M_{fin}=(p\quad 1-p|p\quad 1-p)^T$ and if $X=1$, $\vec{P}^M_{fin}=(q\quad 1-q|q\quad 1-q)^T$, i.e., the systems $S$ and $M$ are \emph{signalling}. This is expected since there is clearly a transfer of information from $S$ to $M$ during the measurement as seen in Figure~\ref{fig: bipartitemeas}. However, this means that the state $\vec{P}_{fin}^{SM}$ is not a valid box world state of 2 systems $S$ and $M$ but a valid state of a single system $SM$ i.e., after Alice performs her wiring/measurement, it is not possible to physically separate Alice's system $S$ from her memory $M$ from Ursula's perspective. For if this were possible, there would be a violation of the no-signalling principle and the notion of relativistic causality. In quantum theory, on the other hand it is always possible to perform separate measurements on Alice's system and her memory even after she measures. We call this feature \emph{supergluing} of post-measurement boxes, where it is no longer possible for Ursula to separately measure $S$ or $M$, but she can only jointly measure $SM$ as though it were a single system. Note that this is only the case for $p\neq q$ and in our example with the PR-box (Section \ref{sec:paradox}), $p=q=1/2$ and $\vec{P}_{fin}^{SM}$ remains a valid bipartite non-signalling state in this particular, fine-tuned case of the PR box.

\paragraph{A glass half full.}
The above-mentioned  features of the memory update in box world are certainly not desirable, and not what one would expect to find in a physical theory with meaningful notions of subsystems. An optimistic way to look at these limitations  is to see them as providing us with further intuition for why PR boxes have not yet been found in nature. One of the main contributions of this paper is the finding that despite these peculiar features of box world and the fact that it has no entangling bipartite joint measurements (a crucial step in the original quantum paradox), a consistent outside perspective of the memory update exists such that with our generalised assumptions, a multi-agent paradox can be recovered. This indicates that the  reversibility of measurements is not crucial to derive this kind of paradox.

\paragraph{Other models for physical measurements.}
Ours is not the first attempt at coming up with a (partial) physical theory that reproduces the statistics of box world. Here we review the  approach of Skrzypczyk et al.\ in~\cite{Skrzypczyk2008}. There the authors consider a variation of box world that has a reduced set of physical states (which the authors call \emph{genuine}), which consists of the PR box and all the deterministic local boxes. The wealth of box world state vectors (i.e.\ the non-signalling polytope, or what we could call  epistemic states) is recovered by allowing classical processing of inputs and outputs via classical wirings, as well as convex combinations thereof. In contrast, box world takes all convex combinations of maximally non-signalling boxes (of which the PR box is an example) to be genuine physical states; this becomes relevant as we require the allowed physical operations to map such states to each other. For the restricted state space of~\cite{Skrzypczyk2008}, the set of allowed combinations is larger than in box world, particularly for multipartite settings. For example, there we are allowed maps that implement the equivalent of entanglement swapping: if Bob shares a PR box with Alice, and another with Charlie, there is an allowed map that he can apply on his two halves which leaves Alice and Charlie sharing a PR box, with some probability. It would be interesting to try to model memory update in this modified theory, to see if (1) there is a more natural implementation of measurements within the extended set of operations, and (2) whether this theory allows for multi-agent paradoxes.

\subsection{Characterization of general theories}

While we have shown that a consistency paradox, similar to the one arising in the Frauchiger-Renner setup, can also be adapted for the box world in terms of GPTs, it still remains unclear how to characterize all possible theories where it is possible to find a setup leading to a contradiction. Essentially, one has to restrict the class of such theories and identify the properties of these theories that make such paradoxes possible. 

\paragraph{Beyond standard composition of systems.} 
Additionally, it is still an open problem to find an operational way to state the outside view of measurements (and a memory update operation), for theories without a prior notion of subsystems and a tensor rule for composing them. This will allow us to search for multi-agent logical paradoxes in field theories, for example. One possible direction is to use  notions of effective and subjective locality, as outlined for example in~\cite{Kraemer2018}.

\paragraph{Relation to contextuality.}
In~\cite{Abramsky15} Abramsky et al explore relations between logical paradoxes and quantum contextuality; in particular, they point out a direct connection between contextuality and a type of classic semantic paradoxes called ``Liar cycles''\cite{Cook2004}. A \textit{Liar cycle} of length N is a chain of statements of the  form:
\begin{align}
    \phi_1 = ``\phi_2 \ \text{is true}'', \phi_2 = ``\phi_3 \ \text{is true}'', \dots, \phi_{N-1} = ``\phi_N \ \text{is true}'', \phi_N = ``\phi_1 \ \text{is false}''.
\end{align}
It can be shown that the patterns of reasoning which are used in finding a contradiction in the chain of statements above are similar to the reasoning we make use of in FR-type arguments, and can also be connected to the cases of PR box (which corresponds to a Liar cycle of length 4) or Hardy's paradox. This might imply that multi-agent paradoxes are linked to the notion of contextuality. Another central ingredient seems to be information-preserving models for physical measurements, which allow us to replace counter-factuals with actual measurements, performed in sequence by different agents.
We leave a deeper investigation of these connections further to future work.

\newpage
\begin{acknowledgements}
We thank Roger Colbeck, Matt Leifer, Sandu Popescu and Renato Renner for valuable discussions, and Ravi Kunjwal for the working title of this paper, \emph{PRdoxes}.
VV acknowledges support from the Department of Mathematics, University of York.
NN and LdR acknowledge support from the Swiss
National Science Foundation through 
SNSF project No.\ $200020\_165843$ and through the 
the National Centre of
Competence in Research \emph{Quantum Science and Technology}
(QSIT). 
LdR further acknowledges support from  the FQXi grant \emph{Physics of the observer}. 
\end{acknowledgements}





\appendix

\addcontentsline{toc}{section}{\sc{Appendix}}

\section*{\textsc{Appendix}}

\section{Modal logic}
\label{appendix:logic}

Here we shortly sum up the important features of modal logic. Importantly, modal logic applies to most classical multi-agent setups, and simply provides a compact mathematical way to capture the intuitive laws commonly used for reasoning.

\subsection{Kripke structures}

In modal logic, a set $\Sigma$ of possible states  (or alternatives, or \emph{worlds}) is introduced \cite{Kripke2007}: for example, in a world $s_1$ the key value is $k=1$ and Eve does not know it, and in a state  $s_2$ Eve could know that $k=0$. The truth value of a proposition $\phi$ is then assigned depending on the possible world in $\Sigma$, and can differ from one possible world to another.
In order to formalize the simple rules agents use for reasoning, we will first provide a structure which serves as a complete picture of the setup the agents are in, and then discuss the elements of the structure. 
\begin{definition}{\textbf{(Kripke structure)}}
A Kripke structure $M$ for $n$ agents over a set of statements $\Phi$ is a tuple $\langle \Sigma, \pi,\mathcal{K}_1,...,\mathcal{K}_n\rangle$ where $\Sigma$ is a non-empty set of states, or possible worlds, $\pi$ is an interpretation, and $\mathcal{K}_i$ is a binary relation on $\Sigma$.

The interpretation $\pi$ is a map $\pi:\Sigma\times\Phi\to\{\textbf{true},\textbf{false}\}$, which defines a truth value of a statement $\phi\in\Phi$ in a possible world $s\in\Sigma$.

$\mathcal{K}_i$ is a binary equivalence relation on a set of states $\Sigma$, where $(s,t)\in\mathcal{K}_i$ if agent $i$ considers world $t$ possible given his information in the world $s$. 
\end{definition}

The truth assignment  tells  us if the proposition $\phi\in\Phi$ is true or false in a possible world $s\in\Sigma$; for example, if $\phi=$ ``Alice has a secret key,'' and $s$ is a world where there is an individual named Alice who indeed possesses a secret key, then $\pi(s,\phi)=\textbf{true}$. The truth value of a statement in a given structure $M$ might vary from one possible world to another; we will denote that $\phi$ is true in world $s$ of a structure $M$ by $(M,s)\models\phi$, and $\models\phi$ will mean that $\phi$ is true in any world $s$ of a structure $M$.

\subsection{Axioms of knowledge (weak version)}

In order to operate the statements agents produce, we have to establish certain rules which are used to compress or judge the statements. These are the axioms of knowledge \cite{LogicStanford}. They might seem trivial in the light of our everyday reasoning, yet given our awareness of the quantum case, we will treat them carefully. Here we present the reader with a weaker version of the axioms (which includes Trust axiom) that we have developed in previous work \cite{NL2018}.

Distribution axiom allows agents combine statement which contain inferences:
\begin{axiom}[Distribution axiom.]
If an agent is aware of a fact $\phi$ and that a fact $\psi$ follows from $\phi$, then the agent can conclude that $\psi$ holds:
$$(M,s)\models(K_i\phi\wedge K_i(\phi\Rightarrow\psi))\Rightarrow (M,s)\models K_i\psi.$$
\end{axiom}

Knowledge generalization rule permits agents use commonly shared knowledge:
\begin{axiom}[Knowledge generalization rule.]
All agents know all the propositions that are  valid in a structure:
\begin{center}
if $(M,s)\models\phi \ \forall s$ then $\models K_i\phi \ \forall i.$
\end{center}
\end{axiom}

Positive and negative introspection axioms highlight the ability of an agent to reflect upon her knowledge:
\begin{axiom}[Positive and negative introspection axioms.] 
Agents can perform  introspection regarding their knowledge:
\begin{center}
$(M,s)\models K_i\phi\Rightarrow (M,s)\models K_iK_i\phi$ (Positive Introspection),\\
$(M,s)\models \neg K_i\phi\Rightarrow (M,s)\models K_i\neg K_i\phi$ (Negative Introspection).
\end{center}
\end{axiom}

We also equip the logical skeleton of the setting with so-called trust structure, which governs the way the information is passed on between agents:
\begin{definition}[Trust]
\label{def:trust}
    We say that an agent $i$ trusts an agent $j$ (and denote it by $j \leadsto i$ ) if and only if $$K_i \ K_j \ \phi \implies K_i \ \phi,$$ for all $\phi$. 
\end{definition}

In the Frauchiger-Renner setup, as well as in the thought experiment presented in this paper, we consider the following trust structure between agents:
\begin{align}
    A \leadsto B \leadsto U\leadsto  W \leadsto A.
    \label{eq:trustFR}
\end{align}

Further discussion on axioms of modal logic and their application in quantum mechanics can be found in our paper \cite{NL2018}.

\section{Generalized probabilistic theories}
\label{appendix:boxes}

In quantum theory, systems are described by states that live in a Hilbert space, measurements and transformations on these states are represented by CPTP maps and the Born rule specifies how to obtain the probabilities of possible measurement outcomes give these states and measurements. In more general theories, there is no reason to assume Hilbert spaces or CPTP maps. In fact such a description of the state space and operations may not even be available, systems may be described as black boxes taking in classical inputs (choice of measurements) and giving classical outputs (measurement outcomes). What we can demand is that the theory provides a way for agents to predict the probabilities of obtaining various outputs based on their input choice and some operational description of the box.

Barrett derived the mathematical structure of the state-space of composite systems and allowed operations on systems from a few reasonable, physically motivated assumptions \cite{Barrett07}. We follow his formalism here.  Later, Gross \emph{et al.}\ found restrictions on the reversible dynamics of maximally non-local GPTs \cite{Gross2010} showing that all reversible operations on box-world are trivial i.e., they map product states to product states and cannot correlate initially uncorrelated systems. In accordance with this, our memory update procedure that maps the initial product state $\vec{P}_{in}^{SM}$ (Equation~\ref{eq: PSMin}) to the final correlated state of the system and memory $ \vec{P}_{fin}^{SM}$ (Equation~\ref{eq: PSMfin} or equivalently Equation~\ref{eq: finshort}) is an irreversible transformation in contrast to the quantum case where the corresponding transformation is a unitary and hence reversible.

 \subsection{Observing outcomes}
 \label{ssec: boxworldoutcomes}

In Section~\ref{sec:boxworld}, we briefly reviewed states and transformations in GPTs, in particular box world; here we go into further detail. Consider a GPT, $\mathbb{T}$. Denoting the set of all allowed states of a system in $\mathbb{T}$ by $\mathcal{S}$, any valid transformation on a normalised GPT state $\vec{P}\in \mathcal{S}$ maps it to another normalised GPT state in $\mathcal{S}$. Consequently, is linear and can be represented by a matrix $M$ such that $\vec{P}\rightarrow M.\vec{P}$ under this transformation and $M.\vec{P}\in \mathcal{S}$ \cite{Barrett07}. Further, operations that result in different possible outcomes can be associated with a set of transformations, one for each outcome. These also give an operational meaning to unnormalised states where $|\vec{P}|=\sum_i P(a=i|X=j)=c \quad \forall j, c\in [0,1]$ (i.e., the norm is independent of the value of $j$). Such an operation $M$ on a normalised initial state $\vec{P}$ can be associated with a set of matrices $\{M_i\}$ such that the unnormalised state corresponding to the $i^{th}$ outcome is $M_i.\vec{P}$. Then the probability of obtaining this outcome is simply the norm of this unnormalised state, $|M_i.\vec{P}|$ and the corresponding normalized final state is $M_i.\vec{P}/|M_i.\vec{P}|$. A set $\{M_i\}$ represents a valid operation if the following hold \cite{Barrett07}.
\begin{subequations}
\begin{equation}
\label{eq: op1}
    0 \leq |M_i.\vec{P}| \leq 1 \qquad \forall i, \vec{P} \in \mathcal{S}
\end{equation}
\begin{equation}
\label{eq: op2}
    \sum_i |M_i.\vec{P}|=1 \qquad \forall \vec{P}\in \mathcal{S}
\end{equation}
\begin{equation}
\label{eq: op3}
    M_i.\vec{P} \in \mathcal{S} \qquad \forall i, \vec{P}\in \mathcal{S}
\end{equation}
\end{subequations}

 This is the analogue of quantum Born rule for GPTs. Box world is a GPT where the state space $\mathcal{S}$ consists of all normalized states $\vec{P}$ whose entries are valid probabilities (i.e., $\in [0,1]$) and satisfy the \emph{no-signalling} constraints i.e., for a $N$-partite state $\vec{P}$, the marginal term $\sum_{a_i}P(a_1,..,a_i,..,a_N|X_1,..,X_i,..,X_N)$ is independent of the setting $X_i$ forall $i\in \{1,...,N\}$\footnote{This is in the spirit of relativistic causality since one would certainly expect that the input of one party does not affect the output of others when the are all space-like separated from each other.}
 
 When the GPT $\mathbb{T}$ is box world, the conditions of Equations~\ref{eq: op1}-\ref{eq: op3} result in the characterization of measurements and transformations in the theory in terms of classical circuits or \emph{wirings} as shown in \cite{Barrett07}. It suffices for the purpose of this paper to take that characterisation as the common knowledge of agents in the theory. In the original quantum paradox \cite{Frauchiger2018}, the Born rule is taken as common knowledge and here, the common knowledge consists of characterisations that follow from the box world analogue of the born rule (Equations~\ref{eq: op1}-\ref{eq: op3}).
We summarise the results of \cite{Barrett07} characterising allowed transformations and measurements in box world and will only consider normalization-preserving transformations.
\begin{itemize}
    \item \textbf{Transformations: } \begin{itemize}
        \item \emph{Single system:} All transformations on single box world systems are relabellings of fiducial measurements or outcomes or a convex combination thereof.
        \item \emph{Bipartite system:} Let $X$ and $Y$ be fiducial measurements performed on the transformed bipartite system with corresponding outcomes $a$ and $b$, then all transformations of 2-gbit systems can be decomposed into convex combinations of classical circuits of the following form: A fiducial measurement $X'=f_1(X,Y)$ is performed on the initial state of the first gbit resulting in the outcome $a'$ followed by a fiducial measurement $Y'=f_2(X,Y,X')$ on the initial state of the second gbit resulting in the outcome $b'$. The final outcomes are given as $(a,b)=f_3(X,Y,a',b')$, where $f_1$, $f_2$ and $f_3$ are arbitrary functions.
    \end{itemize}
     \item \textbf{Meaurements: } \begin{itemize}
        \item \emph{Single system:} All measurements on single box world systems are either fiducial measurements with outcomes relabelled or convex combinations of such.
        \item \emph{Bipartite system:} All bipartite measurements on 2-gbit systems can be decomposed into convex combinations of classical circuits of the following form (Figure~\ref{fig: bipartitemeas}): A fiducial measurement $X$ is performed on the initial state of the first gbit resulting in the outcome $a'$ followed by a fiducial measurement $Y=f(a')$ on the second gbit resulting in the outcome $b'$. The final outcome is $a=f'(a',b')$, where $f$ and $f'$ are arbitrary functions.
    \end{itemize}
\end{itemize}

 \begin{remark}
 Note that an agent Alice who measures a box world system only sees a classical final state, which corresponds the classical measurement outcome, since the box is a single-shot input/output function. Alice could use Equations~\ref{eq: op1}-\ref{eq: op3} to calculate the probabilities of obtaining different outcomes given the measurement she performs and prepare a new box (a new input/output function) depending on the measurement and outcome she just obtained (and has stored in her memory), as in Figure~\ref{fig:measurement_Alice}. An outside agent who does not know Alice's measurement outcome would see correlations between Alice's system and memory and would describe the measurement by an irreversible transformation, more specifically  a classical wiring between Alice's system and memory as shown in the following section.
 \end{remark}

\section{Memory update in box world (proofs)}
\label{appendix:proofs}

\subsection{Single lab}

In this section, we describe how a box world agent would measure a system and store the result in a memory. From the perspective of an outside observer (who does not know the outcome of the agent's measurement), we describe the initial and final states of the system and memory before and after the measurement as well as the transformation that implements this memory update in box world. In the quantum case, any initial state of the system $S$ is mapped to an isomorphic joint state of the system $S$ and memory $M$ (see Equation~\ref{eq:entangling_measurement}) and hence the memory update map that maps the former to the latter (an isometry in this case\footnote{An isometry since it introduces an initial pure state on $M$, followed by a joint unitary on $SM$.}) satisfies Definition~\ref{def:memory_update} of an information-preserving memory update. We will now characterise the analogous memory update map in box world and show that it also satisfies Definition~\ref{def:memory_update}.

\begin{theorem}
\label{theorem: boxmemory}
In box world, there exists a valid transformation $u$ that maps every arbitrary, normalized state $\vec{P}^S_{in}$ of the system $S$ to  an isomorphic final state  $\vec{P}_{fin}^{SM}$ of the system $S$ and memory $M$ and hence constitutes an information-preserving memory update (Definition~\ref{def:memory_update}).
\end{theorem}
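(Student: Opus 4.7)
The plan is to construct $u$ explicitly as the classical wiring depicted in Figure~\ref{fig:update_copy} and verify it meets the three requirements of Definition~\ref{def:memory_update}: validity as a box-world transformation, existence of a corresponding $\mathcal{E}_{SM}$ for every single-system operation $\mathcal{E}_{S}$, and the resulting operational isomorphism between $\vec P^S_{in}$ and $\vec P^{SM}_{fin}$. Concretely, I would take $u$ to initialise Alice's memory $M$ in a fixed deterministic state and then wire the output of $S$ into the output of $M$ in CNOT-like fashion, so that from Ursula's perspective the joint box $SM$ has two binary inputs $(X, X')$ and two binary outputs $(a, a')$ with $a$ distributed according to $\vec P^S_{in}$ and $a'= a \oplus X'$. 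Since this is manifestly of the classical-wiring form characterised in Appendix~\ref{appendix:boxes}, it automatically falls inside the set of allowed transformations in box world.

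First I would write down $\vec P^{SM}_{fin}$ as an explicit 16-component vector (four input combinations times four output combinations) parametrised by the entries $P(a \mid X)$ of $\vec P^S_{in}$, and check the three conditions \ref{eq: op1}--\ref{eq: op3} directly: positivity and normalisation follow because the joint outputs are deterministic functions of $(a, X')$ and $a$ is sampled from a normalised distribution, while the no-signalling constraints follow by observing that summing $a'$ out at fixed $X'$ returns $P(a \mid X)$ independently of $X'$, and summing $a$ out at fixed $X$ returns a marginal on $M$ that depends only on the marginal of $\vec P^S_{in}$ and not on $X'$. This confirms $u$ is a valid box-world transformation.

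The substantive step is the information-preservation clause. For an arbitrary single-system operation $\mathcal E_S$ (which, by the characterisation recalled in Appendix~\ref{appendix:boxes}, is a convex combination of relabellings of fiducial inputs and outputs), I would define $\mathcal E_{SM}$ as the pre/post-processing of Figure~\ref{fig:update_ursula}: fix Ursula's second input to $X' = 0$, apply the wiring that undoes the CNOT copy using the memory output, and then apply $\mathcal E_S$ on the effective single-system box. A direct computation on the vector components then shows $\mathcal E_{SM} \circ u = \iota \circ \mathcal E_S$, where $\iota$ is the canonical embedding of $S$-behaviours into the image of $u$ inside the $SM$ state space. This identity at the level of statistics is exactly what is needed to guarantee the propositional implication $K_j\, \phi[\mathcal E_S(\vec P^S)] \Rightarrow K_j\, \phi[\mathcal E_{SM} \circ u(\vec P^S)]$ that Definition~\ref{def:memory_update} demands.

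The main obstacle I anticipate is pinning down the meaning of ``isomorphic'' at the level of state vectors, since $\vec P^S_{in}$ lives in a 4-dimensional ambient space while $\vec P^{SM}_{fin}$ lives in the 16-dimensional 2-gbit space. The intended notion is operational: via the prescribed pre/post-processing, Ursula's joint box reproduces the full input--output behaviour of the original system. A related subtlety, already flagged in Section~\ref{sec:discussion}, is that $u$ is irreversible and \emph{superglues} $S$ to $M$, so the isomorphism can only be asserted between $\vec P^S_{in}$ and the image of $u$, not between generic 2-gbit states and 1-gbit states. Once this framing is adopted, the proof reduces to the routine but notationally heavy verification that each fiducial measurement probability on Ursula's effective post-processed box equals the corresponding probability on $\vec P^S_{in}$, which I would relegate to a component-wise computation.
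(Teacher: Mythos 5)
Your construction and overall strategy coincide with the paper's proof in Appendix~\ref{appendix:proofs}: both build $u$ as the explicit classical wiring of Figures~\ref{fig:update_copy} and~\ref{fig: memorycircuit} (memory initialised to a pure state, system output copied onto the memory output), both observe that only the blocks with $X=X'$ are constrained while the rest may be arbitrary, and both obtain information preservation by compressing the final state via the isomorphism $X=i\leftrightarrow(X,X')=(i,i)$, $a=i\leftrightarrow(a,a')=(i,i)$ and letting $\mathcal{E}_{SM}$ be $\mathcal{E}_S$ conjugated with Ursula's pre/post-processing of Figure~\ref{fig:update_ursula}. The cosmetic difference that your wiring outputs $a'=a\oplus X'$ rather than $a'=a$ is harmless, since Ursula knows $X'$ and can undo the relabelling.

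There is, however, one step that is wrong as stated: the claim that the final state satisfies the no-signalling constraints because ``summing $a$ out at fixed $X$ returns a marginal on $M$ that depends only on the marginal of $\vec{P}^S_{in}$ and not on $X'$.'' The bipartite no-signalling condition requires the marginal on $M$ to be independent of the \emph{other} party's input $X$, and here it is not: for $\vec{P}^S_{in}=(p\ \ 1-p\,|\,q\ \ 1-q)^T$ with $p\neq q$, the reduced state on $M$ after the update is $(p\ \ 1-p\,|\,p\ \ 1-p)^T$ if $X=0$ and $(q\ \ 1-q\,|\,q\ \ 1-q)^T$ if $X=1$, so $S$ signals to $M$. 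This is precisely the \emph{supergluing} phenomenon of Section~\ref{sec:discussion}, which you correctly invoke at the end of your proposal but which contradicts this verification step: $\vec{P}_{fin}^{SM}$ is not a valid state of two separate gbits, only of the single composite system $SM$. The validity of $u$ must therefore rest --- as it does in the paper, and as your opening paragraph already notes --- solely on the fact that the map is a classical wiring of the form characterised in Appendix~\ref{appendix:boxes}, not on a bipartite no-signalling check of its output. With that step removed, and with the (routine) bitwise extension from gbits to systems with more fiducial measurements and outcomes added, your argument matches the paper's.
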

\begin{proof}
To simplify the argument, we will describe the proof for the case where $S$ and $M$ are gbits. For higher dimensional systems, a similar argument holds, this will be explained at the end of the proof. 

We start with the system in an arbitrary, normalized gbit state $\vec{P}^S_{in}=(p\quad 1-p|q\quad 1-q)^T$ (where the subscript T denotes transpose and $p,q \in [0,1]$) and the memory initialised to one of the 4 pure states\footnote{It does not matter which pure state the memory is initialized in, a similar argument applies in all cases.}, say $\vec{P}^M_{in}=\vec{P}_1=(1\quad 0|1\quad 0)^T$. Then the joint initial state, $ \vec{P}_{in}^{SM}=(p\quad 1-p|q\quad 1-q)^T_S\otimes (1\quad 0|1\quad 0)^T_M$ of the system and memory can be written as follows, where $P_{in}(a=i,a'=j|X=k,X'=l)$ denotes the probability of obtaining the outcomes $a=i$ and $a'=j$ when performing the fiducial measurements $X=k$ and $X'=l$ on the system and memory respectively, in the initial state $\vec{P}_{in}^{SM}$.
\begin{equation}
\label{eq: PSMin}
    \vec{P}_{in}^{SM}=\left(
\begin{array}{c}
P_{in}(a=0,a'=0|X=0,X'=0)\\
P_{in}(a=0,a'=1|X=0,X'=0)\\
P_{in}(a=1,a'=0|X=0,X'=0)\\
P_{in}(a=1,a'=1|X=0,X'=0)\\
\hline
P_{in}(a=0,a'=0|X=0,X'=1)\\
P_{in}(a=0,a'=1|X=0,X'=1)\\
P_{in}(a=1,a'=0|X=0,X'=1)\\
P_{in}(a=1,a'=1|X=0,X'=1)\\
 \hline 
P_{in}(a=0,a'=0|X=1,X'=0)\\
P_{in}(a=0,a'=1|X=1,X'=0)\\
P_{in}(a=1,a'=0|X=1,X'=0)\\
P_{in}(a=1,a'=1|X=1,X'=0)\\
 \hline 
P_{in}(a=0,a'=0|X=1,X'=1)\\
P_{in}(a=0,a'=1|X=1,X'=1)\\
P_{in}(a=1,a'=0|X=1,X'=1)\\
P_{in}(a=1,a'=1|X=1,X'=1)\\
\end{array}
\right)_{SM}=
\left(
\begin{array}{c}
p\\
0\\
1-p\\
0\\
\hline
p\\
0\\
1-p\\
0\\
 \hline 
q\\
0\\
1-q\\
0\\
 \hline 
q\\
0\\
1-q\\
0\\
\end{array}
\right)_{SM}
\end{equation}

The rest of the proof proceeds as follows: we first describe a final state $\vec{P}_{fin}^{SM}$ of the system and memory and a corresponding memory update map $u$ that satisfy Definition~\ref{def:memory_update} of a generalized information-preserving memory update. Then, we show that this map is an allowed box world transformation which completes the proof. 

If an agent performs a measurement on the system, the state of the memory must be updated depending on the outcome and the final state of the system and memory after the measurement must hence be a correlated (i.e., a non-product) state. Although the full state space of the 2 gbit system $SM$ is characterised by the 4 fiducial measurements $(X,X')\in\{(0,0),(0,1),(1,0),(1,1)\}$, Definition~\ref{def:memory_update} allows us to restrict possible final states to a useful subspace of this state space that contain correlated states of a certain form. The definition requires that for every map $\mathcal{E}_S$ on the system before measurement, there exists a corresponding map $\mathcal{E}_{SM}$ on the system and memory after the measurement that is operationally identical. Thus it suffices if the joint final state $\vec{P}_{fin}^{SM}$ belongs to a subspace of the 2 gbit state space for which only 2 of the 4 fiducial measurements are relevant for characterising the state, namely any 2 fiducial measurements on $\vec{P^{SM}_{fin}}$ that are isomorphic to the 2 fiducial measurements on $\vec{P}^S_{in}$.
Note that by definition of fiducial measurements, the outcome probabilities of any measurement can be found given the outcome probabilities of all the fiducial measurements and without loss of generality, we will only consider the case where the agents perform fiducial measurements on their systems.

A natural isomorphism between fiducial measurements on $\vec{P}_{in}^S$ and those on $\vec{P}_{fin}^{SM}$ to consider here (in analogy with the quantum case) is: $X=i \Leftrightarrow (X,X')=(i,i) \quad, \forall i\in \{0,1\}$ i.e., only consider the cases where the fiducial measurements performed on $S$ and $M$ are the same.
Now, in order for the states to be isomorphic or operationally equivalent, one requires that performing the fiducial measurements $(X,X')=(i,i)$ on $\vec{P}_{fin}^{SM}$ should give the same outcome statistics as measuring $X=0$ on $\vec{P}_{in}^S$. This can be satisfied through an identical isomorphism on the outcomes: $a=i \Leftrightarrow (a,a')=(i,i) \quad, \forall i\in \{0,1\}$. Then the final state of the system and memory, $\vec{P}^{SM}_{fin}$ will be of the form
\begin{equation}
\label{eq: PSMfin}
    \vec{P}_{fin}^{SM}=\left(
\begin{array}{c}
P_{fin}(a=0,a'=0|X=0,X'=0)\\
P_{fin}(a=0,a'=1|X=0,X'=0)\\
P_{fin}(a=1,a'=0|X=0,X'=0)\\
P_{fin}(a=1,a'=1|X=0,X'=0)\\
\hline
P_{fin}(a=0,a'=0|X=0,X'=1)\\
P_{fin}(a=0,a'=1|X=0,X'=1)\\
P_{fin}(a=1,a'=0|X=0,X'=1)\\
P_{fin}(a=1,a'=1|X=0,X'=1)\\
 \hline 
P_{fin}(a=0,a'=0|X=1,X'=0)\\
P_{fin}(a=0,a'=1|X=1,X'=0)\\
P_{fin}(a=1,a'=0|X=1,X'=0)\\
P_{fin}(a=1,a'=1|X=1,X'=0)\\
 \hline 
P_{fin}(a=0,a'=0|X=1,X'=1)\\
P_{fin}(a=0,a'=1|X=1,X'=1)\\
P_{fin}(a=1,a'=0|X=1,X'=1)\\
P_{fin}(a=1,a'=1|X=1,X'=1)\\
\end{array}
\right)_{SM}=
\left(
\begin{array}{c}
p\\
0\\
0\\
1-p\\
\hline
*\\
*\\
*\\
*\\
 \hline 
*\\
*\\
*\\
*\\
 \hline 
q\\
0\\
0\\
1-q\\
\end{array}
\right)_{SM},
\end{equation}

where $*$ are arbitrary, normalised entries and where $P_{fin}(a=i,a'=j|X=k,X'=l)$ denotes the probability of obtaining the outcomes $a=i$ and $a'=j$ when performing the fiducial measurements $X=k$ and $X'=l$ on the system and memory respectively, in the final state $\vec{P}_{fin}^{SM}$. 
This final state can be compressed since the only relevant and non-zero probabilities in $\vec{P}_{fin}^{SM}$ occur when $X=X'$ and $a=a'$. We can then define new variables $\tilde{X}$ and $\tilde{a}$ such that $X=X'=i \Leftrightarrow \tilde{X}=i$ and $a=a'=j \Leftrightarrow \tilde{a}=j$ for $i,j \in \{0,1\}$ and $\vec{P}_{fin}^{SM}$ can equivalently be written as in Equation~\ref{eq: finshort} which is clearly of the same form as $P_{in}^S$.

\begin{equation}
\label{eq: finshort}
    \vec{P}_{fin}^{SM}\equiv \left(
\begin{array}{c}
 P(\tilde{a}=0|\tilde{X}=0)\\
 P(\tilde{a}=1|\tilde{X}=0)\\
\hline
 P(\tilde{a}=0|\tilde{X}=1)\\
 P(\tilde{a}=1|\tilde{X}=1)\\
\end{array}
\right)_{SM}=\left(
\begin{array}{c}
 p\\
 1-p\\
\hline
 q\\
 1-q\\
\end{array}
\right)_{SM}
\end{equation}

Hence the initial state of the system, $\vec{P}^S_{in}=(p\quad 1-p|q\quad 1-q)^T$ (which is an arbitrary gbit state) is isomorphic to the final state of the system and memory, $\vec{P}_{fin}^{SM}$ (as evident from Equation~\ref{eq: finshort}) with the same outcome probabilities for $X=0,1$ and $\tilde{X}=0,1$. This implies that for every transformation $\mathcal{E}_S$ on the former, there exists a transformation $\mathcal{E}_{SM}$ on the latter
such that for all outside agents $A_j$ and for all $p,q \in [0,1]$ (i.e., all possible input gbit states on the system), $ \quad K_j \phi[\mathcal{E}_S(\vec{P}^S_{in})] \Rightarrow K_j \phi[\mathcal{E}_{SM}\circ\vec{P}^SM_{fin}]$, where $\vec{P}^{fin}_{SM}=u(\vec{P}^{in}_{S})$. Thus any map $u$ that maps $\vec{P}^{in}_{S}$ to $\vec{P}^{fin}_{SM}$ satisfies Definition~\ref{def:memory_update}. 

We now find a valid box world transformation that maps the initial state $\vec{P}_{in}^{SM}$ (Equation~\ref{eq: PSMin}) to any final state of the form $\vec{P}_{fin}^{SM}$(Equation~\ref{eq: PSMfin}). This fully characterises the memory update map $u:\vec{P}^{in}_{S}\rightarrow \vec{P}^{fin}_{SM}$ since $\vec{P}_{in}^{SM}$ is obtained from $\vec{P}_{in}^{S}$ by simply tensoring a pure state $(1\quad 0| 1 \quad 0)^T_M$. 

Noting that all bipartite transformations in box world can be decomposed to a classical circuit of a certain form (see Appendix~\ref{ssec: boxworldoutcomes} or the original paper \cite{Barrett07} for details), In Figure~\ref{fig: memorycircuit}, we construct an explicit circuit of this form that converts $\vec{P}_{in}^{SM}$ to $\vec{P}_{fin}^{SM}$.By construction, we only need to consider the case of $X=X'$ since for $X\neq X'$, the entries of $\vec{P}_{fin}^{SM}$ can be arbitrary and are irrelevant to the argument. For $X\neq X'$, one can consider any such circuit description and it is easy to see that $\vec{P}_{in}^{SM}=(p\quad 1-p|q\quad 1-q)^T_S\otimes (1\quad 0|1\quad 0)^T_M$ is indeed transformed into $\vec{P}_{fin}^{SM}=(p\quad 0\quad 0\quad 1-p|*\quad*\quad*\quad*|*\quad*\quad*\quad*|q\quad 0\quad 0\quad 1-q)^T_{SM}$ through the transformation $\mathcal{T}$ defined by these sequence of steps. For example, if the circuit description for the $X\neq X'$ case is same as that for the $X=X'$ case, then the resultant memory update map is equivalent to the circuit of Figure~\ref{fig:update_copy} which corresponds to performing a fixed measurement $X'=0$ on the initial state of $M$ and a classical CNOT on the output wire of $M$ controlled by the output wire of $S$\footnote{The output wires of boxes carry classical information after the measurement.}. The final state in that case is $(p\quad 0\quad 0\quad 1-p|p\quad 0\quad 0\quad 1-p|q\quad 0\quad 0\quad 1-q|q\quad 0\quad 0\quad 1-q)^T_{SM}$.
Note that the memory update transformation $\mathcal{T}: \vec{P}_{in}^{SM} \rightarrow \vec{P}_{fin}^{SM}$ and hence the resulting map $u$ are not reversible. This is expected since the initial state $\vec{P}_{in}^{SM}$ is a product state while the final state $\vec{P}_{fin}^{SM}$  clearly is not (since $S$ and $M$ are correlated for an outside observer), and \cite{Gross2010} shows that all reversible transformations in box world map product states to product states.

\begin{figure}[H]
    \centering
\begin{tikzpicture}
\draw[blue!75!black,thick] (-0.5,-5) rectangle (9.1,2.5);
\draw (0,0) rectangle node[align=center]{$\mathbf{X_1=X}$} (2.4,1);
\draw (5,0) rectangle node[align=center]{$\mathbf{X_2=0}$} (7.4,1);
\draw[arrows={-stealth},blue!75!black,thick] (0.6,3.5)--(0.6,2.5); \draw[arrows={-stealth},blue!75!black,thick] (1.8,3.5)--(1.8,2.5);
\node[align=center,blue!75!black] at (0.6,3.8) {$\mathbf{X}$}; \node[align=center,blue!75!black] at (1.8,3.8) {$\mathbf{X'}$};
\draw[arrows={-stealth}] (0.6,2.5)--(0.6,1); \draw[arrows={-stealth}] (1.8,2.5)--(1.8,1);
\draw[arrows={-stealth}] (1.2,0)--(1.2,-1); \draw[arrows={-stealth}] (6.2,0)--(6.2,-1);
\node[align=center] at (1.5,-0.5) {$X_1$}; \node[align=center] at (6.5,-0.5) {$X_2$};
\draw (0.6,-2) rectangle node[align=center]{$\mathbf{S}$} (1.8,-1);
\draw (5.6,-2) rectangle node[align=center]{$\mathbf{M}$} (6.8,-1); \draw[dashed] (1.8,-1.5)--(5.6,-1.5); \draw[arrows={-stealth}] (6.2,-2)--(6.2,-3); \node[align=center] at (6.5,-2.5) {$a_2$}; \draw (5.6,-4.2) rectangle node[rectangle split,rectangle split parts=2]{$\mathbf{a=a_1}$ \nodepart{second} $\mathbf{a'=a_2\oplus a_1}$} (8.6,-3);
\node[align=center,blue!75!black] at (8.5,3) {$\vec{P}_{fin}^{SM}$};
 \draw[arrows={-stealth}] (6.35,-4.2)--(6.35,-5);  
  \draw[arrows={-stealth}] (7.85,-4.2)--(7.85,-5);  
 \draw[arrows={-stealth}] (8.25,-2)--(8.25,-3);  
\draw[arrows={-stealth},blue!75!black,thick] (6.35,-5)--(6.35,-6); \draw[arrows={-stealth},blue!75!black,thick] (7.85,-5)--(7.85,-6);
\node[align=center,blue!75!black] at (6.35,-6.3) {$\mathbf{a}$}; \node[align=center,blue!75!black] at (7.85,-6.3) {$\mathbf{a'}$}; \draw (1.2,-2) to [out=315,in=180] (6.2,1.8);
\draw (8.25,-2) to [out=90,in=0] (6.2,1.8);
\node[align=center] at (1.8,-2.8) {$a_1$};
\node[align=center] at (0.1,-1.5) {$P_{in}^S$};
\node[align=center] at (7.5,-1.5) {\tiny{$\left(\begin{array}{c}1\\0\\
\hline
1\\
0\\
\end{array}\right)_M$}};
\end{tikzpicture}
    \caption{\textbf{Classical circuit decomposition of memory update transformation $\mathcal{T}$:} The blue box represents the final state of the system $S$ and memory $M$ after the memory update characterised by the fiducial measurements $X$ and $X'$ and the outcomes $a$, $a'$. Let $\mathcal{T}$ be the memory update transformation that maps the initial state $\vec{P}_{in}^{SM}$ to a final state $\vec{P}_{fin}^{SM}$.  Noting that we only need to consider the case of $X=X'$ since the for $X\neq X'$, the entries of $\vec{P}_{fin}^{SM}$ can be arbitrary, the action of $\mathcal{T}$ is eqivalent to the circuit shown here i.e., 1) Choose $X_1=X(=X')$ and perform this fiducial measurement on the initial state of the system $\vec{P}_{in}^S$ to obtain the outcome $a_1$. 2) Fix $X_2 =0$ (or $X_2=1$) and perform this fiducial measurement on the initial state of the memory $\vec{P}_{in}^M=(1\quad 0|1\quad 0)^T_M$ to obtain the outcome $a_2$. 3) Set $a=a_1$. 4) If $a_1=1$, set $a'=a_2$, otherwise set $a'=a_2\oplus 1$, where $\oplus$ denotes modulo 2 addition.}
    \label{fig: memorycircuit}
\end{figure}
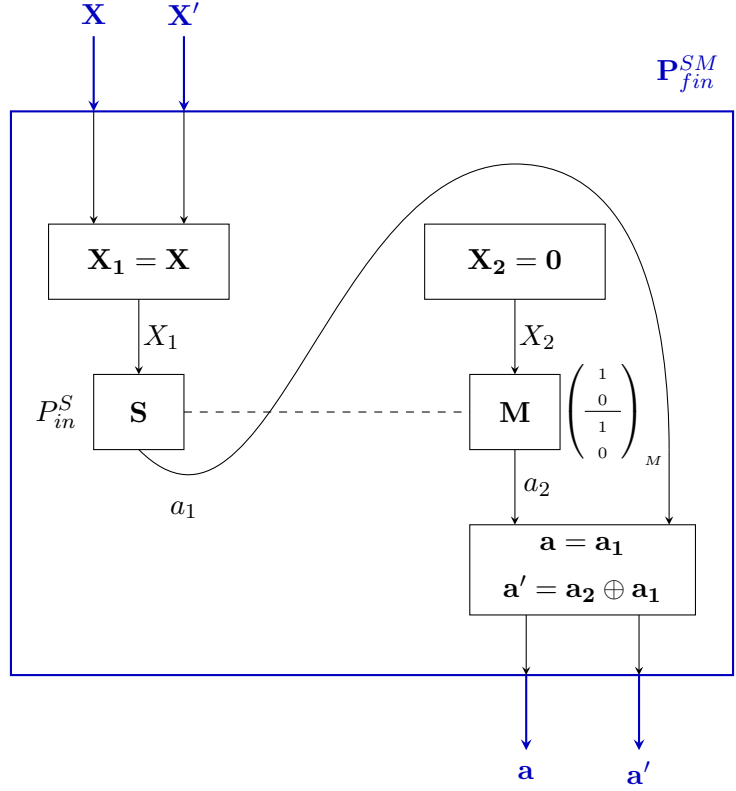

For higher dimensional systems $S$ with $n>2$ fiducial measurements, $X \in \{0,...,n-1\}$ and $k>2$ outcomes taking values $a\in \{0,...,k-1\}$, let $b_n$ and $b_k$ be the number of bits required to represent $n$ and $k$ in binary respectively. Then the memory $M$ would be initialized to $b_k$ copies of the pure state $\vec{P}_{in,n}^M=(1\quad 0|...|1\quad 0)^T_M$ which contains $n$ identical blocks (one for each of the $n$ fiducial measurements). One can then perform the procedure of Figure~\ref{fig: memorycircuit} ''bitwise" combining each output bit with one pure state of $M$ and apply the same argument to obtain the result. For the specific case of the memory update transformation of Figure~\ref{fig:update_copy}, this would correspond to a bitwise CNOT on the output wires of $S$ and $M$.
\end{proof}

\subsection{Two labs sharing initial correlations}

So far, we have considered a single agent measuring a system in her lab. We can also consider situations where multiple agents jointly share a state and measure their local parts of the state, updating their corresponding memories. One might wonder whether the initial correlations in the shared state are preserved once the agents measure it to update their memories (clearly the local measurement probabilities remain unaltered as we saw in this section). The answer is affirmative and this is what allows us to formulate the Frauchiger-Renner paradox in box world as done in the Section~\ref{sec:paradox}, even though a coherent copy analogous to the quantum case does not exist here.

\begin{theorem}
\label{theorem: PRmemory}
Suppose that Alice and Bob share an arbitrary bipartite state $ \vec{P}^{PR}_{in}$ (which may be correlated), locally perform a fiducial measurement on their half of the state and store the outcome in their local memories $A$ and $B$. Then the final joint state $\vec{P}^{\tilde{A}\tilde{B}}_{fin}$ of the systems $\tilde{A}:=PA$ and $\tilde{B}:=RB$ as described by outside agents is isomorphic to  $ \vec{P}^{PR}_{in}$ with the systems $\tilde{A}$ and $\tilde{B}$ taking the role of the systems $P$ and $R$ i.e., local memory updates by Alice and Bob preserve any correlations initially shared between them.
\end{theorem}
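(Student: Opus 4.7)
The plan is to bootstrap from Theorem~\ref{theorem: boxmemory} by observing that Alice's and Bob's memory updates are strictly local classical wirings (in the sense of Figure~\ref{fig: memorycircuit}), so they can be applied independently to the two halves of the shared state $\vec{P}^{PR}_{in}$ without interfering with the bipartite correlations. More concretely, I would write the initial state of the full four-gbit system $PARB$ as $\vec{P}^{PR}_{in}\otimes \vec{P}^{A}_{in}\otimes \vec{P}^{B}_{in}$, where the two memories are each initialized in the fixed pure state $(1\ 0|1\ 0)^T$, and then apply the map $u_A \otimes u_B$ corresponding to the two local wirings constructed in the proof of Theorem~\ref{theorem: boxmemory}.

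The first step is to confirm that $u_A \otimes u_B$ is a valid box world transformation on the bipartite system. Since each factor is itself a classical wiring of the allowed form (a local fiducial measurement followed by classical post-processing into the local memory), their tensor composition is again a classical wiring, hence a permitted box world operation. This avoids any worries about bipartite entangling operations, which are of course unavailable in box world. Next, I would compute, for each choice of fiducial settings $X, X', Y, Y' \in \{0,1\}$ and outcomes $a, a', b, b' \in \{0,1\}$, the joint probabilities
\begin{equation*}
P_{fin}(a, a', b, b' \mid X, X', Y, Y').
\end{equation*}
Because $u_A$ acts only on $PA$ and $u_B$ only on $RB$, these factorise as
\begin{equation*}
P_{fin}(a, a', b, b' \mid X, X', Y, Y') = \sum_{\alpha,\beta} P^A_{wiring}(a,a'|X,X',\alpha)\, P^B_{wiring}(b,b'|Y,Y',\beta)\, P_{in}(\alpha, \beta \mid X, Y),
\end{equation*}
and the explicit form of the wirings (copying the system outcome into the memory output when $X=X'$ and $Y=Y'$) pins down the nontrivial entries.

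The third step is to pass to the "diagonal" subspace by the same change of variables used in Theorem~\ref{theorem: boxmemory}: set $\tilde{X} = X = X'$ and $\tilde{a} = a = a'$ on Alice's side, and $\tilde{Y} = Y = Y'$ and $\tilde{b} = b = b'$ on Bob's side. Within this subspace the wirings yield $P^A_{wiring}(\tilde{a},\tilde{a}|\tilde{X},\tilde{X},\alpha) = \delta_{\tilde{a},\alpha}$ and similarly on Bob's side, so the sum collapses to
\begin{equation*}
P_{fin}(\tilde{a}, \tilde{b} \mid \tilde{X}, \tilde{Y}) = P_{in}(\tilde{a}, \tilde{b} \mid \tilde{X}, \tilde{Y}),
\end{equation*}
which is exactly the claimed isomorphism between $\vec{P}^{\tilde{A}\tilde{B}}_{fin}$ and $\vec{P}^{PR}_{in}$. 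The remaining (off-diagonal) entries of the 4-gbit state, corresponding to $X\neq X'$ or $Y\neq Y'$, are irrelevant for Definition~\ref{def:memory_update} exactly as in the single-lab proof, and can be filled in consistently by the free starred entries of Equation~\ref{eq: PSMfin}.

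The main obstacle I anticipate is purely bookkeeping: one has to check that the marginal no-signalling conditions between $\tilde{A}$ and $\tilde{B}$ in the compressed state inherit from the no-signalling of $\vec{P}^{PR}_{in}$, and that the supergluing phenomenon discussed in Section~\ref{sec:discussion} (which can make $PA$ fail to factor as two separately accessible gbits) does not interfere with the bipartite access of $\tilde{A}$ versus $\tilde{B}$. Since $u_A$ and $u_B$ act locally on disjoint subsystems, no signalling is introduced across the $\tilde{A}/\tilde{B}$ cut, and the "gluing" happens strictly within each lab, so Ursula's and Wigner's effective box on $\tilde{A}\tilde{B}$ behaves as a bona fide bipartite non-signalling box isomorphic to the original $\vec{P}^{PR}_{in}$.
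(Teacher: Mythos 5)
Your proposal is correct and follows essentially the same route as the paper's proof: both treat each memory update as the local classical wiring of Theorem~\ref{theorem: boxmemory}, restrict attention to the diagonal blocks $X=X'$, $Y=Y'$ (compressing to $\tilde X,\tilde a,\tilde Y,\tilde b$), and observe that there the wirings act as identity on the outcome statistics, so $\vec{P}^{\tilde A\tilde B}_{fin}\equiv\vec{P}^{PR}_{in}$. The only differences are presentational — you apply $u_A\otimes u_B$ simultaneously via a conditional-probability factorisation where the paper applies the two updates sequentially via an explicit block-matrix computation $(\mathcal{I}_P\otimes\mathcal{T}_{RB})$ — and these do not change the substance of the argument.
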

\begin{proof}
In the following, we describe the proof for the case where the bipartite system shared by Alice and Bob consists of 2 gbits, however, the result easily generalises to arbitrary higher dimensional systems by the argument presented in the last paragraph of the proof of Theorem~\ref{theorem: boxmemory}. 

Let $\vec{P}^{PR}_{in}$ be an arbitrary 2 gbit state with entries $P_{in}(ab=ij|XY=kl)$ ($i,j,k,l \in \{0,1\}$), which correspond to the joint probabilities of Alice and Bob obtaining the outcomes $a=i$ and $b=j$ when measuring $X=k$ and $Y=l$ on the $P$ and $R$ subsystems when sharing that initial state. Let $X',a'\in \{0,1\}$ and $Y',b' \in \{0,1\}$ be the fiducial measurements and outcomes for the memory systems $A$ and $B$ (also gbits) respectively. We describe the measurement and memory update process for each agent separately and characterise the final state of Alice's and Bob's systems and memories after the process as would appear to outside agents who do not have access to Alice and Bob's measurement outcomes. This analysis does not depend on the order in which Alice and Bob perform the measurement as the correlations are symmetric between them, so without loss of generality, we can consider Bob's measurement first and then Alice's. 

Suppose that Bob's memory $B$ is initialised to the state $\vec{P}_{in}^{B}=\vec{P}_1^{B}=(1\quad 0|1\quad 0)^T_{B}$. Then the joint initial state of the Alice's and Bob's system and Bob's memory as described by an agent Wigner outside Bob's lab is $\vec{P}^{PRB}_{in}=\vec{P}^{PR}_{in}\otimes\vec{P}_1^{B}$. This can be expanded as follows where $P_{in}(abb'=ijk|XYY'=lmn)$ represents the probability of obtaining the binary outcomes $a=i$,$b=j$,$b'=k$ when performing the binary fiducial measurements $X=l$,$Y=m$,$Y'=n$ on the initial state $\vec{P}^{PRB}_{in}$.
\begin{equation}
\label{eq: PRBin}
    \vec{P}^{PRB}_{in}=\left(
\begin{array}{c}
 P_{in}(abb'=000|XYY'=000)\\
 P_{in}(abb'=001|XYY'=000)\\
 P_{in}(abb'=010|XYY'=000)\\
 P_{in}(abb'=011|XYY'=000)\\
 P_{in}(abb'=100|XYY'=000)\\
 P_{in}(abb'=101|XYY'=000)\\
 P_{in}(abb'=110|XYY'=000)\\
 P_{in}(abb'=111|XYY'=000)\\
\hline
\cdot\\
\cdot\\
\cdot\\
\hline
 P_{in}(abb'=000|XYY'=111)\\
 P_{in}(abb'=001|XYY'=111)\\
 P_{in}(abb'=010|XYY'=111)\\
 P_{in}(abb'=011|XYY'=111)\\
 P_{in}(abb'=100|XYY'=111)\\
 P_{in}(abb'=101|XYY'=111)\\
 P_{in}(abb'=110|XYY'=111)\\
 P_{in}(abb'=111|XYY'=111)\\
\end{array}
\right)_{PRB}=\left(
\begin{array}{c}
 P_{in}(ab=00|XY=00)\\
 0\\
 P_{in}(ab=01|XY=00)\\
  0\\
 P_{in}(ab=10|XY=00)\\
  0\\
 P_{in}(ab=11|XY=00)\\
  0\\
\hline
\cdot\\
\cdot\\
\cdot\\
 \hline P_{in}(ab=00|XY=11)\\
  0\\
 P_{in}(ab=01|XY=11)\\
  0\\
 P_{in}(ab=10|XY=11)\\
  0\\
 P_{in}(ab=11|XY=11)\\
\end{array}
\right)_{PRB} 
\end{equation}
$\vec{P}^{PRB}_{in}$ has 8 blocks $G_{XYY'}$, one for each value of $(X,Y,Y')$ and is a product state with 4 equal pairs of blocks, $G^{in}_{000}=G^{in}_{001}$,$G^{in}_{010}=G^{in}_{011}$, $G^{in}_{100}=G^{in}_{101}$, $G^{in}_{110}=G^{in}_{111}$ since both measurements on the initial state of $B$ give the same outcome.

Now, the outside observer Wigner will describe the transformation on $RB$ through 
the memory update transformation $\mathcal{T}$ of Figure~\ref{fig: memorycircuit}. Let $ \vec{P}^{PRB}_{fin}$ be the final state that results by applying this map to the systems $RB$ in the initial state  $\vec{P}^{PRB}_{in}$. Any transformation on a system characterised by $n$ fiducial measurements with $k$ outcomes each can be represented by a $nk\times nk$ block matrix where each block is a $k\times k$ matrix (see \cite{Barrett07} for further details), for the system $RB$, $n=k=4$ and the memory update transformation $\mathcal{T}_{RB}$ would be a $16\times 16$ block matrix of the following form where each $T_{ij}$ is a $4\times 4$ matrix.

\begin{equation*}
\mathcal{T}_{RB}=
\left(
\begin{array}{c|c|c}
T_{11} & \cdot\quad\cdot\quad\cdot & T_{14} \\
\hline
\cdot&\quad &\cdot\\
\cdot&\quad &\cdot\\
\cdot&\quad &\cdot\\
\hline
T_{41} & \cdot\quad\cdot\quad\cdot & T_{44}
\end{array}
\right)_{RB}
\end{equation*}
Here, the first 4 rows decide the entries in the first block of the transformed matrix, the next 4, the second block and so on. Noting that the memory update transformation (Figure~\ref{fig: memorycircuit}) merely permutes elements within the relevant blocks (and does not mix elements between different blocks), the only non-zero blocks of $\mathcal{T}_{RB}$ are the diagonal ones $T_{ii}$. Further, by the same argument as in Theorem~\ref{theorem: boxmemory}, the only relevant entries in the transformed state are when the same fiducial measurement is performed on Bob's system $R$ and memory $B$ i.e., only cases where $Y=Y'$. The remaining measurement choices maybe arbitrary for the final state (just as they are for $X\neq X'$ in Equation~\ref{eq: PSMfin}). This means that among the 4 diagonal blocks, only 2 of them are relevant. The 4 fiducial measurements on $RB$ are $YY'=00,01,10,11$ and in that order, only the first and fourth are relevant since they correspond to $Y=Y'$. Within these relevant blocks (in this case $T_{11}$ and $T_{44}$), the operation is a CNOT on the output $b'$ controlled by the output $b$ and we have the following matrix representation of the memory update map $\mathcal{T}$ of Figure~\ref{fig: memorycircuit}\footnote{The memory update map corresponding to the circuit of Figure~\ref{fig:update_copy} is a specific case of this map where the arbitrary blocks $*$ are also equal to $CN$}.

\begin{equation}
 \mathcal{T}_{RB}=
\left(
\begin{array}{c|c|c|c}
CN & 0 & 0 & 0 \\
\hline
0&*&0&0\\
\hline
0&0&*&0\\
\hline
0 & 0&0 & CN
\end{array}
\right)_{RB},  \qquad CN= \left(
\begin{array}{cccc}
1&0&0&0\\
0&1&0&0\\
0&0&0&1\\
0&0&1&0\\
\end{array}\right)
\end{equation}
where $0$ represents the $4\times 4$ null matrix and blocks labelled $*$ can be arbitrary. The final state $\vec{P}_{fin}^{PRB}$ as seen by Wigner is then
\begin{equation}
   \vec{P}_{fin}^{PRB}= (\mathcal{I}_P\otimes\mathcal{T}_{RB})\vec{P}_{in}^{PRB}=(\mathcal{I}_P\otimes\mathcal{T}_{RB})\Big[\vec{P}_{in}^{PR}\otimes \left(\begin{array}{c}
        1 \\
        0\\
        \hline
        1\\
        0\\
   \end{array}\right)_{B}\Big],
\end{equation}
where $\mathcal{I}_P$ is the identity transformation on the $P$ system. Since the $CN$ blocks are the only relevant blocks in $\mathcal{T}_{RB}$ and each block of $\vec{P}_{in}^{PRB}$ has the same pattern of non-zero and zero entries (Equation~\ref{eq: PRBin}), it is enough to look at the action of $\mathcal{I}_P\otimes CN$ on the first block $G^{in}_{000}$ of $\vec{P}_{in}^{PRB}$. Noting that $\mathcal{I}_P$ is a $2\times 2$ identity matrix, we have

\begin{align*}
\begin{split}
( \mathcal{I}_P\otimes CN)G^{in}_{000}&=\left(\begin{array}{cccccccc}
    1&0&0&0&0&0&0&0 \\
    0&1&0&0&0&0&0&0 \\
    0&0&0&1&0&0&0&0 \\
    0&0&1&0&0&0&0&0 \\
    0&0&0&0&1&0&0&0 \\
    0&0&0&0&0&1&0&0 \\
    0&0&0&0&0&0&0&1 \\
    0&0&0&0&0&0&1&0 \\
\end{array}\right) \left(
\begin{array}{c}
 P_{in}(ab=00|XY=00)\\
 0\\
 P_{in}(ab=01|XY=00)\\
  0\\
 P_{in}(ab=10|XY=00)\\
  0\\
 P_{in}(ab=11|XY=00)\\
  0\\
\end{array}
\right)=G^{fin}_{000}\\
&=\left(
\begin{array}{c}
 P_{in}(ab=00|XY=00)\\
 0\\
 0\\
 P_{in}(ab=01|XY=00)\\
 P_{in}(ab=10|XY=00)\\
  0\\
  0\\
 P_{in}(ab=11|XY=00)\\
\end{array}
\right)
=\left(
\begin{array}{c}
 P_{fin}(abb'=000|XYY'=000)\\
 P_{fin}(abb'=001|XYY'=000)\\
 P_{fin}(abb'=010|XYY'=000)\\
 P_{fin}(abb'=011|XYY'=000)\\
 P_{fin}(abb'=100|XYY'=000)\\
 P_{fin}(abb'=101|XYY'=000)\\
 P_{fin}(abb'=110|XYY'=000)\\
 P_{fin}(abb'=111|XYY'=000)\\
\end{array}
\right),
\end{split}
\end{align*}

where $P_{fin}(abb'=ijk|XYY'=lmn)$ represents the probability of obtaining the outcomes $a=i$,$b=j$,$b'=k$ when performing the fiducial measurements $X=l$,$Y=m$,$Y'=n$ on the final state $\vec{P}^{PRB}_{fin}$ and $G^{fin}_{000}$ is the first block of this final state. Clearly the only non-zero outcome probabilities are when $b=b'$ and this allows us to compress the final state by defining $\tilde{b}=i\Leftrightarrow b=b'=i$ for $i\in \{0,1\}$ and we have the following.
\begin{equation*}
    ( \mathcal{I}_P\otimes CN)G^{in}_{000} \equiv \left(
\begin{array}{c}
 P_{in}(ab=00|XY=00)\\
 P_{in}(ab=01|XY=00)\\
 P_{in}(ab=10|XY=00)\\
 P_{in}(ab=11|XY=00)\\
\end{array}
\right)=\left(
\begin{array}{c}
 P_{fin}(a\tilde{b}=00|XYY'=000)\\
 P_{fin}(a\tilde{b}=01|XYY'=000)\\
 P_{fin}(a\tilde{b}=10|XYY'=000)\\
 P_{fin}(a\tilde{b}=11|XYY'=000)
\end{array}
\right)=G_{00}^{in}
\end{equation*}
Here $G_{00}^{in}$ is the first block of the initial state $\vec{P}_{in}^{PR}$ and we have that the first block of the final state of $PRB$ is equivalent (up to zero entries) to the first block of the initial state over $PR$ alone or $G^{fin}_{000}=G_{00}^{in}$. Among the 8 blocks of $\vec{P}_{fin}^{PRB}$, only the 4 blocks $G^{fin}_{000}$,$G^{fin}_{011}$,$G^{fin}_{100}$ and $G^{fin}_{111}$ are the relevant ones (since $Y=Y'$ for these) and we can similarly show that $G^{fin}_{011}\equiv G^{in}_{01}$,$G^{fin}_{100}\equiv G^{in}_{10}$ and $G^{fin}_{111}\equiv G^{in}_{11}$ for the remaining 3 relevant blocks. Defining $\tilde{Y}=i \Leftrightarrow Y=Y'=i$ for $i\in\{0,1\}$, we obtain

\begin{equation}
\label{eq: PRBfin}
    \vec{P}_{fin}^{PRB}=\vec{P}_{fin}^{P\tilde{B}}\equiv\left(
\begin{array}{c}
 P_{fin}(a\tilde{b}=00|X\tilde{Y}=00)\\
 P_{fin}(a\tilde{b}=01|X\tilde{Y}=00)\\
 P_{fin}(a\tilde{b}=10|X\tilde{Y}=00)\\
 P_{fin}(a\tilde{b}=11|X\tilde{Y}=00)\\
 \hline
  P_{fin}(a\tilde{b}=00|X\tilde{Y}=01)\\
 P_{fin}(a\tilde{b}=01|X\tilde{Y}=01)\\
 P_{fin}(a\tilde{b}=10|X\tilde{Y}=01)\\
 P_{fin}(a\tilde{b}=11|X\tilde{Y}=01)\\
 \hline
  P_{fin}(a\tilde{b}=00|X\tilde{Y}=10)\\
 P_{fin}(a\tilde{b}=01|X\tilde{Y}=10)\\
 P_{fin}(a\tilde{b}=10|X\tilde{Y}=10)\\
 P_{fin}(a\tilde{b}=11|X\tilde{Y}=10)\\
 \hline
  P_{fin}(a\tilde{b}=00|X\tilde{Y}=11)\\
 P_{fin}(a\tilde{b}=01|X\tilde{Y}=11)\\
 P_{fin}(a\tilde{b}=10|X\tilde{Y}=11)\\
 P_{fin}(a\tilde{b}=11|X\tilde{Y}=11)\\
\end{array}
\right)= \left(
\begin{array}{c}
 P_{in}(ab=00|XY=00)\\
 P_{in}(ab=01|XY=00)\\
 P_{in}(ab=10|XY=00)\\
 P_{in}(ab=11|XY=00)\\
 \hline
  P_{in}(ab=00|XY=01)\\
 P_{in}(ab=01|XY=01)\\
 P_{in}(ab=10|XY=01)\\
 P_{in}(ab=11|XY=01)\\
 \hline
  P_{in}(ab=00|XY=10)\\
 P_{in}(ab=01|XY=10)\\
 P_{in}(ab=10|XY=10)\\
 P_{in}(ab=11|XY=10)\\
 \hline
  P_{in}(ab=00|XY=11)\\
 P_{in}(ab=01|XY=11)\\
 P_{in}(ab=10|XY=11)\\
 P_{in}(ab=11|XY=11)\\
\end{array}
\right)=\vec{P}_{in}^{PR}
\end{equation}
Equation~\ref{eq: PRBfin} shows that final state $\vec{P}_{fin}^{P\tilde{B}}$ of Alice's system $P$, Bob's system $R$ and Bob's memory $B$ after Bob's local memory update is isomorphic to the initial state $\vec{P}_{in}^{PR}$ shared by Alice and Bob, having the same outcome probabilities as the latter for all the relevant measurements. Thus the initial correlations present in $\vec{P}_{in}^{PR}$ are preserved after Bob locally updates his memory according to the update procedure of Figure~\ref{fig: memorycircuit}. One can now repeat the same argument for Alice's local memory update taking $\vec{P}_{fin}^{P\tilde{B}}\otimes (1\quad 0|1 \quad 0)^T_{A}$ to be the initial state and by analogously defining $\tilde{s}=i \Leftrightarrow s=s'=i$ for $s\in \{a,X\}$,$i\in \{0,1\}$, we have the required result that the final state after both parties perform their local memory updates (as described by outside agents Ursula and Wigner) is isomorphic and operationally equivalent to the initial state shared by the parties before the memory update.
\begin{equation}
    \vec{P}_{fin}^{PARB}=\vec{P}_{fin}^{\tilde{A}\tilde{B}}\equiv \vec{P}_{in}^{PR}
\end{equation}

\end{proof}

\section{Quantum measurements in GPT language}
\label{appendix:fr}
In the PR box analysis, we encounter a peculiarity which is specific to measurement procedures in GPTs: the box ``disappears'' after it
is measured. This can become a problem when, during the course of the experiment, the observer measuring the box has to be measured together with the box. This is the case in the original Frauchiger-Renner thought experiment.
However, this issue can in principle be avoided, if one adapts the description of the experiment to the mentioned peculiarity: as soon as the agent measures the box, and it subsequently disappears, she prepares a new box for the observer on the outside to measure. For example, when Alice measures the box $P$, she can not only prepare a box $R_a$ for Bob to measure (Figure \ref{fig:alice-bob}), but also one for Wigner, meant to contain correlations of the Bob's lab (Figure \ref{fig:alice-wigner}). Similarly, from Bob's point of view, he prepares a box $PA_b$ for Ursula to measure (Figure \ref{fig:bob-ursula}); and, finally, as seen from the outside, Ursula and Wigner measure boxes $PA_b$ and $RB_a$, prepared for them by Bob and Alice (Figure \ref{fig:outside}).

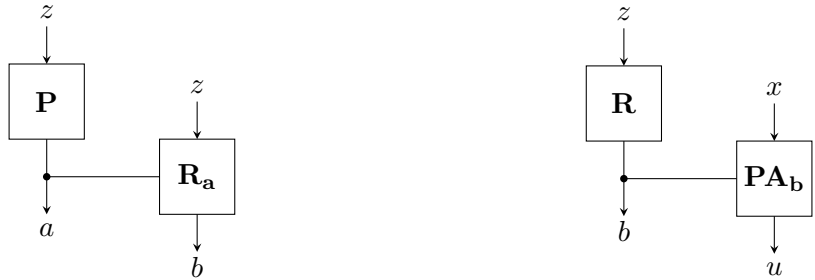
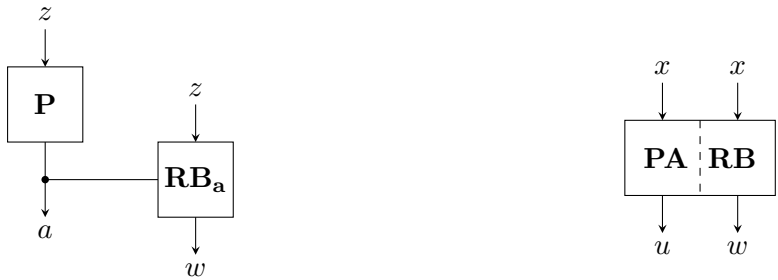
\begin{figure}[h]
\centering
    \begin{subfigure}{0.45\textwidth}
        \begin{align*}
        \begin{tikzpicture}
        \draw (0,0) rectangle node[align=center]{$\mathbf{P}$} (1,1);
        \draw[arrows={-stealth}] (0.5,1.5)--(0.5,1);
        \node[align=center] at (0.5,1.7) {$z$};
        \draw[arrows={-stealth}]  (0.5,0)--(0.5,-1);
        \node[align=center] at (0.5, -1.2) {$a$};
        \node[circle,fill,inner sep=1pt] at (0.5,-0.5) {};
        \draw (0.5,-0.5) -- (2,-0.5);
        \draw (2,0) rectangle node[align=center]{$\mathbf{R_a}$} (3,-1);
        \draw[arrows={-stealth}] (2.5,0.5)--(2.5,0);
        \node[align=center] at (2.5,0.7) {$z$};
        \draw[arrows={-stealth}] (2.5,-1)--(2.5,-1.5);
        \node[align=center] at (2.5, -1.7) {$b$};
        \end{tikzpicture}
        \end{align*}
        \caption{Alice's viewpoint: Alice measures the box $P$ and prepares a box $R_a$ for Bob to measure.}
        \label{fig:alice-bob}
    \end{subfigure}
    \
    \begin{subfigure}{0.45\textwidth}
        \begin{align*}
        \begin{tikzpicture}
        \draw (0,0) rectangle node[align=center]{$\mathbf{R}$} (1,1);
        \draw[arrows={-stealth}] (0.5,1.5)--(0.5,1);
        \node[align=center] at (0.5,1.7) {$z$};
        \draw[arrows={-stealth}]  (0.5,0)--(0.5,-1);
        \node[align=center] at (0.5, -1.2) {$b$};
        \node[circle,fill,inner sep=1pt] at (0.5,-0.5) {};
        \draw (0.5,-0.5) -- (2,-0.5);
        \draw (2,0) rectangle node[align=center]{$\mathbf{PA_b}$} (3,-1);
        \draw[arrows={-stealth}] (2.5,0.5)--(2.5,0);
        \node[align=center] at (2.5,0.7) {$x$};
        \draw[arrows={-stealth}] (2.5,-1)--(2.5,-1.5);
        \node[align=center] at (2.5, -1.7) {$u$};
        \end{tikzpicture}
        \end{align*}
        \caption{Bob's viewpoint: Bob measures the box $R$ and prepares a box $PA_b$ for Ursula to measure.}
        \label{fig:bob-ursula}
    \end{subfigure}
    \\
    \begin{subfigure}{0.45\textwidth}
        \begin{align*}
        \begin{tikzpicture}
        \draw (0,0) rectangle node[align=center]{$\mathbf{P}$} (1,1);
        \draw[arrows={-stealth}] (0.5,1.5)--(0.5,1);
        \node[align=center] at (0.5,1.7) {$z$};
        \draw[arrows={-stealth}]  (0.5,0)--(0.5,-1);
        \node[align=center] at (0.5, -1.2) {$a$};
        \node[circle,fill,inner sep=1pt] at (0.5,-0.5) {};
        \draw (0.5,-0.5) -- (2,-0.5);
        \draw (2,0) rectangle node[align=center]{$\mathbf{RB_a}$} (3,-1);
        \draw[arrows={-stealth}] (2.5,0.5)--(2.5,0);
        \node[align=center] at (2.5,0.7) {$z$};
        \draw[arrows={-stealth}] (2.5,-1)--(2.5,-1.5);
        \node[align=center] at (2.5, -1.7) {$w$};
        \end{tikzpicture}
        \end{align*}
        \caption{Alice's viewpoint: after measuring the box $P$, she also prepares a box $RB_a$ for Wigner to measure.}
        \label{fig:alice-wigner}
    \end{subfigure}
    \
    \begin{subfigure}{0.45\textwidth}
        \begin{align*}
        \begin{tikzpicture}
        \draw (0,0) rectangle node[align=center]{$\mathbf{PA}$ \ $\mathbf{RB}$} (2,1);
        \draw[arrows={-stealth}] (0.5,1.5)--(0.5,1);
        \node[align=center] at (0.5,1.7) {$x$};
        \draw[arrows={-stealth}] (1.5,1.5)--(1.5,1);
        \node[align=center] at (1.5,1.7) {$x$};
        \draw [dashed] (1,1)--(1,0);
        \draw[arrows={-stealth}] (0.5,0)--(0.5,-0.5);
        \node[align=center] at (0.5,-0.7) {$u$};
        \draw[arrows={-stealth}] (1.5,0)--(1.5,-0.5);
        \node[align=center] at (1.5,-0.7) {$w$};
        \end{tikzpicture}
        \end{align*}
        \caption{Ursula's and Wigner's viewpoints: they measure boxes $PA$ and $RB$ respectively.}
        \label{fig:outside}
    \end{subfigure}
\caption{\textbf{Viewpoints of different agents for quantum measurements in GPTs.} }
\label{fig:}
\end{figure}

\end{document}